\documentclass[11pt]{article}

\usepackage[margin=1in]{geometry}
\usepackage{amsmath,amssymb,amsthm,mathtools}
\usepackage{bm}
\usepackage[ruled,vlined]{algorithm2e}
\usepackage{booktabs}
\usepackage{graphicx}
\usepackage{enumitem}
\usepackage{microtype}
\usepackage[colorlinks=true,linkcolor=blue,citecolor=blue,urlcolor=blue]{hyperref}
\hypersetup{pdftitle={Dynamical low-rank equilibrium computation for stochastic games between advanced persistent threats and moving target defense},pdfauthor={Zijian Tian, He Zhang, Xinjie Chen, Wenhai Wang, Xinggao Liu},pdfkeywords={advanced persistent threats, moving target defense, zero-sum stochastic games, dynamical low-rank approximation, equilibrium computation, industrial control systems}}
\usepackage{natbib}
\usepackage{float}
\usepackage[font=small]{caption}

\theoremstyle{plain}
\newtheorem{theorem}{Theorem}
\newtheorem{lemma}{Lemma}
\newtheorem{corollary}{Corollary}
\newtheorem{proposition}{Proposition}
\theoremstyle{definition}

\newtheorem{assumption}{Assumption}
\newtheorem{remark}{Remark}
\newtheorem{problem}{Problem}
\theoremstyle{remark}

\title{Dynamical low-rank equilibrium computation for stochastic games between advanced persistent threats and moving target defense\thanks{The source code is available at \url{https://github.com/tz98lab/dlr-ne-ics-security}.}}
\author{Zijian Tian$^{1}$, He Zhang$^{1}$, Xinjie Chen$^{1,2}$, Wenhai Wang$^{1}$, and Xinggao Liu$^{1,}$\thanks{Corresponding author: Xinggao Liu (email: lxg@zju.edu.cn).}\\
\textit{$^{1}$College of Control Science and Engineering, Zhejiang University, PC 310027, China}\\
\textit{$^{2}$Alibaba Group, Hangzhou, China}}
\date{}

\begin{document}
\maketitle

% ============================================================
% ABSTRACT (rewrite v2)
% ============================================================
\begin{abstract}
Moving target defense (MTD) against advanced persistent threats (APTs) in industrial control systems (ICS) has well-established game-theoretic formulations, but their practical value hinges on equilibrium computation, which faces two gaps: full-rank value iteration is prohibitively expensive at industrial scale, and the resulting defense strategies admit no certified robustness against adversarial perturbations.
We first reveal that the attack and defense influence matrices of ICS dynamics are intrinsically low-rank: APTs infiltrate through a handful of entry points and MTD reconfigures only a few components per cycle.
Our theory makes four contributions.
First, an augmented gradient matrix certifies that the low-rank structure propagates through the non-smooth Bellman operator of the zero-sum stochastic game, so that every Bellman target lies near a low-dimensional subspace and low-rank truncation incurs an explicit error bound (Lemma~1, Theorem~1).
Second, we propose the \underline{D}ynamical \underline{L}ow-\underline{R}ank \underline{N}ash \underline{E}quilibrium algorithm, named DLR-NE, which augments the rank-$r$ search space each iteration, regularizes the core matrix spectrum, and retracts via truncated SVD, and prove that it converges geometrically to a neighborhood whose error decomposes into five physically interpretable sources (Theorem~2).
Third, its per-step cost is $\mathcal{O}(nr^2)$, a $\Theta(n/r^2)$ speedup over full-rank value iteration (Theorem~3).
Fourth, a single weight trades accuracy against a certified sensitivity bound of the induced defense strategy under core-matrix perturbations (Corollary~1).
Six experiments on a nonlinear power-system testbed confirm each prediction, with $94\%$ parameter compression at $2.3\%$ utility loss.
We make all of our experimental data and code publicly available.$^{*}$
\end{abstract}

\noindent\textbf{Keywords:} advanced persistent threats; moving target defense; zero-sum stochastic games; dynamical low-rank approximation; equilibrium computation; industrial control systems.

% ============================================================
% SECTION 1: INTRODUCTION
% ============================================================
% ============================================================
% SECTION 1: INTRODUCTION (rewrite v2, ~960 words, target ~1.5 pages)
% ============================================================
\section{Introduction}

Advanced persistent threats (APTs) are prolonged, stealthy campaigns that infiltrate industrial control systems (ICS), evade detection, and progressively seize control of critical components, constituting one of the most formidable challenges to industrial cyber-physical security~\citep{alshamrani2019aptsurvey,xu2026sr}. Static defenses---fixed configurations, signature-based monitoring, physical isolation---are increasingly ineffective against such evolving campaigns, and learning-based intrusion detection struggles to keep pace with adversaries that continuously adapt to evade classification~\citep{lv2020ids,shen2026stapt,presekal2025aptpower}.

Moving target defense (MTD) has therefore emerged as a dynamic paradigm that periodically reconfigures system configurations to invalidate adversarial reconnaissance~\citep{cho2020mtdsurvey,zhang2025mtdaisurvey}. Most existing MTD strategies, however, follow offline periodic schedules: once the adversary discerns the rhythm, the attack surface becomes predictable again. Adaptive scheduling and multi-layer coordination~\citep{umsonst2024bayesmtd,zang2026aptmtd,nespoli2026reactive} improve responsiveness but remain confined to discrete event models or static optimization. Stochastic game theory addresses this gap by encoding state evolution, action coupling, and long-term objectives in a unified framework~\citep{shapley1953stochastic,lye2005game}, and attacker--defender interactions have accordingly been modeled as zero-sum or dynamic games~\citep{bai2017datainjection,miao2018hybrid,ugrinovskii2017jammer,wu2020zerosum,li2025encryption,huang2020dynamic,jiao2026bgflipin,qin2026insider}. These formulations, however, remain largely confined to finite discrete states or small-scale graphs that do not capture the high-dimensional continuous dynamics of industrial control systems.

However, industrial reality is governed by nonlinear dynamics in continuous time, where voltages, pressures, and temperatures constitute thousands of coupled variables; APT trajectories evolve continuously, and defensive reconfigurations cannot await offline batch processing. Equilibrium computation in this regime confronts two barriers. First, full-rank value iteration scales poorly with the state dimension: for a typical system with $n$ ranging from $10^3$ to $10^4$, a single-hidden-layer network entails $10^6$ to $10^8$ parameters and each Bellman update demands $10^9$ to $10^{12}$ floating-point operations, far exceeding real-time control budgets. Low-rank factorizations are a mainstream remedy in parameter-efficient learning~\citep{hu2022lora}, yet for game-theoretic equilibrium computation neither their approximation accuracy nor their robustness implications are understood. Second, the resulting equilibrium policies lack certified robustness: unlike classical $H_\infty$ controllers whose degradation under uncertainty is bounded a priori~\citep{zhu2011robust}, neural parameterizations admit no computable sensitivity bound, so adversarial perturbations of weights may induce arbitrarily large deviations of the induced defense strategy.

To resolve both barriers, we inspect the physical structure of industrial dynamics. The high-dimensional process state is influenced by attacker and defender actions only through a small number of independent channels: APTs infiltrate through a handful of entry points and manipulate a limited set of actuators or links, while MTD can reconfigure only a restricted subset of relays, segments, or control loops per cycle owing to resource and latency constraints. The attack and defense influence matrices are therefore intrinsically low-rank at every operating point. Full-rank parameterizations ignore this embedded structure, incurring both the computational curse of dimensionality and a structural fragility without quantifiable safety margin.

\textbf{Our Contributions:} To the best of our knowledge, this work is the first to endow low-rank equilibrium computation for adversarial stochastic games with explicit certificates. First, we prove that the low-rank coupling of ICS dynamics propagates step-wise through the non-smooth Bellman operator of the zero-sum stochastic game: the augmented gradient matrix (Lemma~\ref{lem:1}) certifies that every Bellman target lies near a low-dimensional subspace, and truncating the trained representation to low rank then incurs an explicit error governed by the tail singular values of the full-rank benchmark (Theorem~\ref{thm:1}). Second, building on this bridge, we propose the \underline{D}ynamical \underline{L}ow-\underline{R}ank \underline{N}ash \underline{E}quilibrium algorithm, named DLR-NE, which augments the rank-$r$ search space to rank $2r$ at each iteration to track the drifting target subspace, regularizes the spectrum of the core matrix through a tunable weight $\beta$, and retracts onto the low-rank manifold via truncated SVD, extracting a Nash equilibrium at every iteration. Third, we establish the algorithmic certificates: geometric convergence to a steady-state neighborhood whose error budget decomposes into five physically interpretable sources (Theorem~\ref{thm:2}); per-step cost $\mathcal{O}(nr^2)$, a $\Theta(n/r^2)$ speedup over full-rank value iteration (Theorem~\ref{thm:3}); and an explicit sensitivity bound of the induced defense strategy against core-matrix perturbations, in which $\beta$ trades accuracy against certified robustness via condition-number control (Corollary~\ref{cor:1}). Finally, we validate every theoretical prediction through six numerical experiments on a nonlinear power-system testbed, attaining $94\%$ parameter compression at $2.3\%$ utility loss.

\textbf{Related Works:} Three lines of research are most closely related. First, game-theoretic APT defense has largely been studied at the model level: stochastic games over information-flow graphs for detection~\citep{moothedath2020tac,moothedath2024tac}, game-theoretic MTD in power grids~\citep{hu2021mtdpower,lakshminarayana2021mtdpower,wang2025mtdsmartgrid}, and deception-based defense~\citep{tian2021honeypot,wan2023hypergame}; from a systems-theoretic perspective, complementary analyses characterize multi-agent dynamics under switching topologies~\citep{zorzan2017consensus} and antagonistic couplings~\citep{depasquale2022consensus}, which expose the qualitative effects of reconfiguration and adversarial structure but carry no equilibrium objective; these works establish strategic or structural models but restrict the state space or stop at modeling without scalable certified equilibrium computation. Second, dynamical low-rank approximation (DLRA) has matured into a rigorous numerical framework~\citep{ceruti2022unconventional,kieri2016smallsv,koch2007dynamical,lubich2014projector}, and low-rank reparameterization has transformed neural network training and fine-tuning~\citep{hu2022lora,savostianova2023robust}; however, the DLRA toolbox has never been connected to stochastic games, whose non-smooth max-min Bellman operator, non-stationary approximation subspaces, and adversarial policy sensitivity have no counterpart in existing theory. Third, learning-based equilibrium computation~\citep{diddigi2022minimaxq,farahmand2010,hu2003nashq,littman2001markov,moothedath2024automatica,munos2003,zhu2022minimaxqnet} operates on full-rank parameterizations and inherits their cost without robustness certificates. Our work sits at the intersection of these three lines: we take the game models seriously at industrial dimensions, import the DLRA mechanism as a computational device, and supply the missing certificates through the augmented gradient matrix analysis. Neither neural value iteration nor low-rank reparameterization is individually new; our contribution is the certified bridge between them, together with the problem-specific analysis of their interplay under non-stationary Bellman targets and adversarial sensitivity.

The rest of the paper is organized as follows. Section~2 presents the system dynamics, the zero-sum stochastic game, and the equilibrium computation problem. Section~3 establishes the low-rank approximability theory, from physical coupling to the explicit approximation bound. Section~4 develops the DLR-NE algorithm and its convergence, complexity, and robustness guarantees. Section~5 reports the numerical experiments, and Section~6 concludes the paper.

% ============================================================
% SECTION 2: PRELIMINARIES
% ============================================================
\section{Preliminaries and problem formulation}

\paragraph*{Notation}
We denote by $\mathbb{R}$ and $\mathbb{R}_{\geq 0}$ the sets of real and non-negative real numbers, respectively. For a set $\mathcal{Y}$, $\Delta(\mathcal{Y})$ denotes the set of probability measures on $\mathcal{Y}$. For a matrix $M$, $\|M\|_2$ and $\|M\|_F$ denote its spectral and Frobenius norms, and $\sigma_i(M)$ its $i$-th singular value in non-increasing order. For $f:\mathcal{X}\to\mathbb{R}$, $\|f\|_\infty=\sup_{x\in\mathcal{X}}|f(x)|$. We write $\mathcal{C}(\mathcal{X})$ for the space of continuous real-valued functions on $\mathcal{X}$. Throughout, $\epsilon$ quantifies structural approximation errors (reward tolerance, truncation tails, min-max accuracy) and $\varepsilon$ quantifies algorithmic ones (inner-loop accuracy, total single-step budget).

\subsection{System dynamics and low-rank coupling}

Consider an industrial control system (ICS) whose joint physical state at discrete time $t\in\mathbb{Z}_{\geq 0}$ is
\begin{equation}
x_t=\big[x_t^{(1)},\dots,x_t^{(n)}\big]^\top\in\mathcal{X}\subseteq\mathbb{R}^n,
\label{eq:state}
\end{equation}
where $x_t^{(i)}$ is the physical quantity of node $i$ (e.g., a voltage magnitude, pressure, or temperature) and the state space $\mathcal{X}$ is compact, as enforced in practice by physical safety limits.

Unlike combinatorial action models that select $k$ of $n$ nodes, we parameterize the actions of the two players by \emph{continuous low-dimensional intensity vectors}. The APT chooses $a_t\in\mathcal{A}=[0,1]^{r_a}$ with $r_a\ll n$, whose entries describe intrusion or data-injection intensities on $r_a$ independent attack channels; the defender (MTD) chooses $d_t\in\mathcal{D}=[0,1]^{r_d}$ with $r_d\ll n$, whose entries describe reconfiguration intensities on $r_d$ independent defense channels. The small channel counts reflect physical reality: an APT can only exploit its few entry points, and MTD can only reconfigure a limited subset of components within each operational cycle.

The discrete-time dynamics of the ICS are governed by the nonlinear stochastic difference equation
\begin{equation}
x_{t+1}=f(x_t,a_t,d_t)+w_t,\qquad w_t\sim\mathcal{N}(0,\Sigma_w),
\label{eq:dyn}
\end{equation}
where $f:\mathbb{R}^n\times\mathcal{A}\times\mathcal{D}\to\mathbb{R}^n$ is the deterministic transition map and $\{w_t\}_{t\geq 0}$ is an i.i.d.\ Gaussian process noise independent of the initial state. The key structural property of \eqref{eq:dyn} is that the \emph{direct} influence of both players factors through few low-dimensional channels:
\begin{equation}
f(x,a,d)=f_0(x)+B_a(x)\,a+B_d(x)\,d,
\label{eq:decomp}
\end{equation}
where $f_0$ is the autonomous dynamics, and $B_a:\mathcal{X}\to\mathbb{R}^{n\times r_a}$ and $B_d:\mathcal{X}\to\mathbb{R}^{n\times r_d}$ are the state-dependent attack and defense influence matrices whose columns describe the coupling direction of each channel \citep{fawzi2014secure,pasqualetti2013attack}. State dependence is essential: it captures, e.g., the operating-point dependence of power-flow Jacobians, and is not a linearization.

\begin{assumption}[System regularity]\label{ass:1}
\emph{(a)} $\mathcal{X}$ is compact and convex, and $\mathcal{A},\mathcal{D}$ are compact;
\emph{(b)} $f_0$, $B_a$, and $B_d$ are $L_f$-Lipschitz on $\mathcal{X}$;
\emph{(c)} the stage reward $r:\mathcal{X}\times\mathcal{A}\times\mathcal{D}\to\mathbb{R}$ is $L_r$-Lipschitz and satisfies $|r(x,a,d)|\leq R_{\max}<\infty$;
\emph{(d)} $\|\Sigma_w\|_2\leq\sigma_w^2<\infty$;
\emph{(e)} $f_0$ is twice continuously differentiable with $\|\nabla^2 f_0(x)\|_2\leq H_{\max}$ on $\mathcal{X}$;
\emph{(f)} the reward factors through a low-dimensional projection: $r(x,a,d)=r_0(C_r x,a,d)$ for some $C_r\in\mathbb{R}^{r_r\times n}$ with $r_r\ll n$ and $L_{r0}$-Lipschitz $r_0$;
\emph{(g)} $R_{\mathcal{X}}:=\sup_{x\in\mathcal{X}}\|x\|_2<\infty$;
\emph{(h)} for any $V\in\mathcal{C}(\mathcal{X})$, the min-max in \eqref{eq:bellman} admits an $\epsilon_{\mathrm{mm}}$-approximate solution in polynomial time; since $r_a+r_d\ll n$, the intrinsic dimension is low. If $Q_x$ is not convex--concave in $(a,d)$, polynomial time is understood in the granularity of a discretization of $\mathcal{A}\times\mathcal{D}$---feasible because $r_a+r_d\ll n$---and concavity in $d$ is imposed in Assumption~\ref{ass:5}(a) below.
\end{assumption}

\begin{assumption}[Low-rank coupling]\label{ass:2p}
The influence matrices act through few channels: $B_a(x)\in\mathbb{R}^{n\times r_a}$ and $B_d(x)\in\mathbb{R}^{n\times r_d}$ for all $x\in\mathcal{X}$, with $r_a,r_d\ll n$.
\end{assumption}

The content of Assumption~\ref{ass:2p} is the \emph{channel counts} $r_a,r_d\ll n$, not the rank bounds: $\mathrm{rank}(B_a(x))\leq r_a$ and $\mathrm{rank}(B_d(x))\leq r_d$ follow trivially from the matrix dimensions once $B_a$ and $B_d$ take values in $\mathbb{R}^{n\times r_a}$ and $\mathbb{R}^{n\times r_d}$. It is not a statistical hypothesis but a structural fact of ICS: the APT can only inject through its compromised entry points, whose number $r_a$ is small relative to the plant dimension, and the defender reconfigures only $r_d$ schedulable resources per cycle; both matrices can be read off from the system topology and resource configuration.

\subsection{Zero-sum stochastic game and the Bellman operator}

The strategic interaction between APT and MTD is modeled as a two-player zero-sum stochastic game~\citep{mertens1981,shapley1953stochastic} $\mathcal{G}=(\mathcal{X},\mathcal{A},\mathcal{D},P,r)$, where $P(\cdot\,|\,x,a,d)$ is the transition kernel induced by \eqref{eq:dyn}. The players act \emph{simultaneously} at each step without observing each other's current action, reflecting that MTD reconfiguration periods and APT intrusion steps are of comparable timescale. For a strategy pair $(\pi_D,\pi_A)$ of Markov strategies $\pi_D:\mathcal{X}\to\Delta(\mathcal{D})$, $\pi_A:\mathcal{X}\to\Delta(\mathcal{A})$, the defender's utility is
\begin{equation}
U_D(x_0;\pi_D,\pi_A)=\mathbb{E}_{\pi_D,\pi_A}\Big[\sum_{t=0}^{\infty}\gamma^t r(x_t,a_t,d_t)\,\Big|\,x_0\Big],
\label{eq:utility}
\end{equation}
with discount factor $\gamma\in(0,1)$, and $U_A=-U_D$. The equilibrium is characterized by the minimax value function
\begin{equation}
V^*(x)=\max_{\pi_D}\min_{\pi_A}\,U_D(x;\pi_D,\pi_A),\qquad \forall x\in\mathcal{X}.
\label{eq:vstar}
\end{equation}
By Shapley's theorem \citep{shapley1953stochastic} a Nash (minimax) equilibrium exists and $V^*$ is unique for finite games; under Assumption~\ref{ass:1}, existence and uniqueness carry over to the present continuous setting by standard compactness arguments \citep{mertens1981}.

Define the Bellman operator $\mathcal{T}:\mathcal{C}(\mathcal{X})\to\mathcal{C}(\mathcal{X})$ by
\begin{equation}
(\mathcal{T}V)(x)=\max_{p_D\in\Delta(\mathcal{D})}\min_{p_A\in\Delta(\mathcal{A})}\int_{\mathcal{A}\times\mathcal{D}}\Big[r(x,a,d)+\gamma\int_{\mathcal{X}}V(x')P(dx'|x,a,d)\Big]p_A(da)\,p_D(dd).
\label{eq:bellman}
\end{equation}
Because the action spaces are continuous, the outer max-min is over mixed strategies; by Assumption~\ref{ass:1}(h) it is attained and can be approximated to accuracy $\epsilon_{\mathrm{mm}}$ in polynomial time thanks to the low intrinsic dimension $r_a+r_d$.

\begin{proposition}[Contraction]\label{prop:contraction}
Under Assumption~\ref{ass:1}(c), $\|\mathcal{T}V_1-\mathcal{T}V_2\|_\infty\leq\gamma\|V_1-V_2\|_\infty$ for all $V_1,V_2\in\mathcal{C}(\mathcal{X})$.
\end{proposition}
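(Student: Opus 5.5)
The plan is the standard Shapley contraction argument. The only property of the max-min we need is that it is nonexpansive in the sup norm, combined with the fact that $P(\cdot\,|\,x,a,d)$ is a probability measure.

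Fix $V_1,V_2\in\mathcal{C}(\mathcal{X})$ and a state $x$. For each $V$ define the stage payoff
\[
G_V(a,d)=r(x,a,d)+\gamma\int_{\mathcal{X}}V(x')P(dx'|x,a,d),
\]
and its bilinear extension to mixed strategies
\[
\Phi_V(p_D,p_A)=\int G_V\,dp_A\,dp_D .
\]
Assumption~\ref{ass:1}(c) guarantees that $r$ is bounded and measurable, so the payoff is integrable. Because the two payoffs share the same reward term, it cancels, giving the pointwise bound
\[
|G_{V_1}(a,d)-G_{V_2}(a,d)|\le\gamma\int|V_1-V_2|\,dP\le\gamma\|V_1-V_2\|_\infty .
\]
Integrating against any product measure $p_A\otimes p_D$ then yields
\[
\sup_{p_D,p_A}|\Phi_{V_1}-\Phi_{V_2}|\le\gamma\|V_1-V_2\|_\infty .
\]

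The key step is the elementary inequality
\[
\Bigl|\sup_{p}\inf_{q}\Phi_1(p,q)-\sup_{p}\inf_{q}\Phi_2(p,q)\Bigr|\le\sup_{p,q}|\Phi_1(p,q)-\Phi_2(p,q)| .
\]
To prove it, let $\delta$ denote the right-hand side. For every $(p,q)$ we have $\Phi_1(p,q)\le\Phi_2(p,q)+\delta$. Taking the infimum over $q$ and then the supremum over $p$ preserves this inequality, which gives one direction; the reverse direction follows by symmetry. Applying it with $\delta=\gamma\|V_1-V_2\|_\infty$ gives
\[
|(\mathcal{T}V_1)(x)-(\mathcal{T}V_2)(x)|\le\gamma\|V_1-V_2\|_\infty .
\]
Taking the supremum over $x$ gives the claim. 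The argument uses only sup and inf, so it does not require the max-min to be attained and does not require a minimax theorem. It therefore holds even when the order of players matters.

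The main technical point, rather than a real obstacle, is well-posedness. We need $\mathcal{T}$ to map $\mathcal{C}(\mathcal{X})$ into itself, or at least into bounded functions, since only then is $\|\cdot\|_\infty$ meaningful on the image. Boundedness is immediate: $|\mathcal{T}V|\le R_{\max}+\gamma\|V\|_\infty$. Continuity in $x$ follows from three facts: $r$ is Lipschitz, the Gaussian kernel induced by \eqref{eq:dyn} is weakly continuous in $f(x,a,d)$, and $\mathcal{A}\times\mathcal{D}$ is compact so the estimates are uniform in the actions. With these, $G_V$ is continuous in $x$ uniformly over $(a,d)$, and the same nonexpansiveness inequality transfers this continuity to the value.

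One caveat: if $f(x,a,d)$ leaves $\mathcal{X}$, the noise has full support, so $V$ must be interpreted on the support of $P$. I would treat this as implicitly handled by the paper's compactness convention, for example by truncating or projecting onto $\mathcal{X}$. The contraction estimate itself is unaffected, because it only uses that $P$ is a probability measure on the set where $V$ is defined.
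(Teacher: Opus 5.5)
Your proposal is correct and follows the same route as the paper's one-line proof. The reward term cancels, integration against the probability kernel and the mixed strategies is non-expansive, the sup--inf map is $1$-Lipschitz in the sup norm, and the factor $\gamma$ gives the contraction. You simply supply the details the paper leaves implicit: the elementary proof of the sup--inf inequality, which needs neither attainment nor a minimax theorem; the well-posedness of $\mathcal{T}$ on $\mathcal{C}(\mathcal{X})$; and the caveat that the Gaussian noise can push $x'$ outside $\mathcal{X}$.
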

\begin{proof}
The expectation is non-expansive, the max-min operator is $1$-Lipschitz in sup-norm, and the prefactor $\gamma<1$ scales any difference.
\end{proof}
By Banach's fixed-point theorem, $\mathcal{T}$ admits a unique fixed point, which coincides with $V^*$:
\begin{equation}
\mathcal{T}V^*=V^*.
\label{eq:fixedpoint}
\end{equation}

A derived constant used throughout Section~4: if $V$ is $L_V$-Lipschitz on $\mathcal{X}$, then $\mathcal{T}V$ is $(L_r+\gamma L_V L_f)$-Lipschitz, because the stage reward contributes $L_r$, the next-state term contributes $\gamma L_V L_f$ through the $L_f$-Lipschitz map \eqref{eq:dyn} (the noise $w_t$ cancels in the difference), and the max-min expectations are non-expansive. Since every iterate encountered below has uniformly bounded weights and a Lipschitz activation in \eqref{eq:forward}, all Bellman targets in this paper are $L_{\mathcal{T}}$-Lipschitz with $L_{\mathcal{T}}$ a system constant.

\subsection{Neural value approximation and problem statement}

Since $V^*$ admits no closed form in high dimension, we approximate it with a single-hidden-layer network
\begin{equation}
V(x;\theta)=w_{\mathrm{out}}^\top\phi(Wx),\qquad \theta=(w_{\mathrm{out}},W),\qquad W=USV^\top,
\label{eq:nn}
\end{equation}
where $W\in\mathbb{R}^{m\times n}$ is the hidden-layer weight matrix with width $m\sim n$, $w_{\mathrm{out}}\in\mathbb{R}^m$, and $\phi$ is an $L_\phi$-Lipschitz activation applied element-wise. We constrain \emph{only} $W$ to be low-rank: $U\in\mathbb{R}^{m\times r}$, $V\in\mathbb{R}^{n\times r}$ are column-orthogonal with core matrix $S\in\mathbb{R}^{r\times r}$, and $r\ll\min(m,n)$ is the truncation rank; $w_{\mathrm{out}}$ remains full-rank since its dimension $m$ is already far below $mn$. The forward pass rewrites as
\begin{equation}
V(x;\theta)=w_{\mathrm{out}}^\top\phi\big(U\,S\,(V^\top x)\big),
\label{eq:forward}
\end{equation}
so that evaluation costs $\mathcal{O}((m+n)r)$ instead of $\mathcal{O}(mn)$, and the parameter count drops from $mn+m$ to $(m+n)r+r^2+m$.

\begin{assumption}[Neural approximability]\label{ass:3p}
There exists a full-rank parameter $\theta_{\mathrm{full}}^*=(w_{\mathrm{out}}^*,W_{\mathrm{full}}^*)$ such that $\|V(\cdot;\theta_{\mathrm{full}}^*)-V^*\|_\infty\leq\epsilon_{\mathrm{NN}}$ for arbitrarily small $\epsilon_{\mathrm{NN}}>0$.
\end{assumption}
This is the universal approximation theorem \citep{cybenko1989,hornik1991} on the compact set $\mathcal{X}$, and fixes the full-rank benchmark against which the low-rank error is certified in Section~3.

The problem addressed in this paper can now be stated.

\begin{problem}\label{prob:P}
Under Assumptions~\ref{ass:1}--\ref{ass:3p}, design an iterative algorithm that computes a parameter $\theta^*=(w_{\mathrm{out}}^*,W^*)$ with $\mathrm{rank}(W^*)\leq r$ such that
\begin{enumerate}
\item[(i)] \emph{accuracy:} $V(\cdot;\theta^*)$ approximates the Nash equilibrium value $V^*$ with a controlled error $\|V(\cdot;\theta^*)-V^*\|_\infty$;
\item[(ii)] \emph{scalability:} the per-step cost is $\mathcal{O}(nr^2)$ with $r\ll n$, instead of $\mathcal{O}(n^2)$ for full-rank value iteration (rising to $\mathcal{O}(n^3)$ if dense SVD diagnostics are included);
\item[(iii)] \emph{certified convergence:} the iterates converge at a geometric rate, and the limiting error is explicitly quantified.
\end{enumerate}
\end{problem}

Problem~\ref{prob:P} is not solvable by inspection: despite the low-rank \emph{coupling} of the dynamics, it is a priori unclear whether this structure survives the nonlinear and max-min operations in \eqref{eq:bellman} so that $V^*$ itself is low-rank approximable. Establishing this bridge---from physical low-rank structure to algorithmic low-rank computation---is the subject of Section~3.

% ============================================================
% SECTION 3: PHYSICAL TO ALGORITHMIC LOW RANK
% ============================================================
\section{From physical to algorithmic low rank}\label{sec:3}

The low-rank coupling of Assumption~\ref{ass:2p} restricts how the two players can influence the state, but it is not \emph{a priori} clear that this restriction benefits computation: the Bellman operator \eqref{eq:bellman} composes the nonlinear dynamics with a non-smooth max-min, and either operation could, in principle, destroy the low-rank structure. This section resolves the question in two steps. Lemma~\ref{lem:1} shows that the structure survives a \emph{single} Bellman update---the Bellman target of a low-rank value function remains locally approximable in a low-rank subspace, and identifies the object (the augmented gradient matrix) that quantifies the error. Theorem~\ref{thm:1} then states a \emph{global} approximation guarantee for the optimal value function $V^*$ itself: truncating the trained representation to rank $r$ incurs an explicit loss governed by the tail singular values of a full-rank benchmark, and is therefore small whenever that benchmark exhibits fast spectral decay---the property evidenced for ICS dynamics by the spectral measurements of Section~\ref{sec:5} and certified locally, step by step, by Lemma~\ref{lem:1}.

\subsection{Local low-rank approximability of Bellman targets}

\begin{lemma}[Low-rank local approximability of Bellman updates]\label{lem:1}
Under Assumptions~\ref{ass:1} and \ref{ass:2p}, let the current value function be the network \eqref{eq:nn} with factors $U\in\mathbb{R}^{m\times r}$, $S\in\mathbb{R}^{r\times r}$, $V\in\mathbb{R}^{n\times r}$, where $\phi$ is $L_\phi$-Lipschitz and $U^\top U=V^\top V=I_r$. Then for any prescribed $\delta\in(0,1]$ and any $\varepsilon>0$, there exists a finite set $\{x_j\}_{j=1}^{N_\delta}\subset\mathcal{X}$ ($N_\delta<\infty$ by compactness) such that on each ball $\mathcal{B}(x_j,\delta):=\{x\in\mathcal{X}:\|x-x_j\|\leq\delta\}$ the Bellman target $y(x):=(\mathcal{T}V)(x)$ is approximated by a single-hidden-layer network
\begin{equation}
\widehat{V}_j(x)=\widehat{w}_{\mathrm{out},j}^\top\phi(\widehat{W}_j x),\qquad \widehat{W}_j\in\mathbb{R}^{\widehat{m}\times n},\;\widehat{w}_{\mathrm{out},j}\in\mathbb{R}^{\widehat{m}},
\label{eq:localNN}
\end{equation}
for some width $\widehat{m}\in\mathbb{N}$, with the properties:
\begin{enumerate}
\item[(i)] \emph{local low rank:} $\mathrm{rank}(\widehat{W}_j)\leq r+r_r$;
\item[(ii)] \emph{local error bound:} for all $x\in\mathcal{B}(x_j,\delta)$,
\begin{equation}
|y(x)-\widehat{V}_j(x)|\leq C_{\mathrm{EYM}}\cdot\epsilon_{\mathrm{EYM}}^{(j)}(r_r)+C_1\delta+C_2\delta^2+\varepsilon,
\label{eq:lem1bound}
\end{equation}
where
\begin{equation}
\epsilon_{\mathrm{EYM}}^{(j)}(r_r):=\sqrt{\sum_{i=r+1}^{r+r_r}\sigma_i^2\big(\mathcal{G}(x_j)\big)}
\label{eq:localEYM}
\end{equation}
is the Eckart--Young--Mirsky error of truncating away the last $r_r$ singular values of the \emph{augmented gradient matrix}
\begin{equation}
\mathcal{G}(x_j):=\begin{bmatrix}SV^\top\nabla f_0(x_j)\\ C_r\end{bmatrix}\in\mathbb{R}^{(r+r_r)\times n},
\label{eq:augG}
\end{equation}
and the constants are
\begin{equation}
\begin{aligned}
C_{\mathrm{EYM}}&:=\sqrt{L_{r0}^2+(\gamma L_\phi\|w_{\mathrm{out}}\|)^2}\cdot R_{\mathcal{X}},\\
C_1&:=\gamma L_\phi\|w_{\mathrm{out}}\|\,\|S\|_2L_B\big(\sqrt{r_a}+\sqrt{r_d}\big),\\
C_2&:=\tfrac{1}{2}\gamma L_\phi\|w_{\mathrm{out}}\|\,\|S\|_2H_{\max}.
\end{aligned}
\label{eq:lem1consts}
\end{equation}
\end{enumerate}
\end{lemma}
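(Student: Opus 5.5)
We would prove Lemma~\ref{lem:1} by writing the Bellman target on each ball as a function of a few linear features of $x$, and then realising that function with a network whose first layer reads only those features.

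\textbf{Step 1: reduce the target to linear features.} Write out $y(x)=(\mathcal{T}V)(x)$ using \eqref{eq:decomp} and \eqref{eq:forward}. For fixed $(a,d)$ the integrand is
\[
r_0(C_r x,a,d)+\gamma\,\mathbb{E}_w\big[w_{\mathrm{out}}^\top\phi\big(US V^\top(f_0(x)+B_a(x)a+B_d(x)d+w)\big)\big].
\]
So $x$ enters only through $C_r x$ and through $SV^\top f(x,a,d)$. Fix a centre $x_j$ from a finite $\delta$-net of the compact set $\mathcal{X}$. Replace $f_0(x)$ by its first-order expansion $f_0(x_j)+\nabla f_0(x_j)(x-x_j)$. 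By Assumption~\ref{ass:1}(e) the remainder is at most $\tfrac12 H_{\max}\delta^2$. Propagating it through $\|S\|_2$, $L_\phi$, $\|w_{\mathrm{out}}\|$ and $\gamma$ gives the $C_2\delta^2$ term.

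\textbf{Step 2: freeze the influence matrices.} Replace $B_a(x),B_d(x)$ by $B_a(x_j),B_d(x_j)$. Lipschitzness (constant $L_B$, i.e.\ $L_f$ from Assumption~\ref{ass:1}(b)) together with $\|a\|\le\sqrt{r_a}$ and $\|d\|\le\sqrt{r_d}$ on the unit cubes gives the $C_1\delta$ term. The max-min and the Gaussian expectation are both $1$-Lipschitz in sup-norm (as in Proposition~\ref{prop:contraction}), so these pointwise bounds pass through the operator unchanged.

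After Steps 1--2, on $\mathcal{B}(x_j,\delta)$ the surrogate target has the form $g_j(\mathcal{G}(x_j)x)$. Here $g_j:\mathbb{R}^{r+r_r}\to\mathbb{R}$ is continuous, and it is Lipschitz with constant $\sqrt{L_{r0}^2+(\gamma L_\phi\|w_{\mathrm{out}}\|)^2}$ with respect to the stacked feature $z=\mathcal{G}(x_j)x$. This holds because $\|U\|_2=1$ and the two blocks contribute orthogonally in the Euclidean norm.

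\textbf{Step 3: truncate and realise the function.}
\begin{itemize}
\item Apply Eckart--Young--Mirsky: let $\mathcal{G}_{k}(x_j)$ be the best approximation of $\mathcal{G}(x_j)$ of the lower rank $k$ implied by \eqref{eq:localEYM}. For $\|x\|\le R_{\mathcal{X}}$, the feature error $\|(\mathcal{G}-\mathcal{G}_k)x\|$ is bounded by the Frobenius tail $\epsilon^{(j)}_{\mathrm{EYM}}$ times $R_{\mathcal{X}}$. Using the Frobenius norm upper-bounds the spectral one, so the bound is slightly loose but valid.
\item The Lipschitz constant of $g_j$ then yields the $C_{\mathrm{EYM}}\epsilon^{(j)}_{\mathrm{EYM}}$ term.
\item Apply the universal approximation theorem (as in Assumption~\ref{ass:3p}) to the continuous function $g_j$ on the compact image $\mathcal{G}_k(x_j)\mathcal{X}$. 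This gives a one-hidden-layer network $\widehat w^\top\phi(Az)$ within $\varepsilon$, of some width $\widehat m$.
\item Set $\widehat W_j:=A\,\mathcal{G}_k(x_j)$. Then $\mathrm{rank}(\widehat W_j)\le\mathrm{rank}\,\mathcal{G}(x_j)\le r+r_r$, which is property (i).
\end{itemize}
Summing the four contributions by the triangle inequality gives \eqref{eq:lem1bound}.

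\textbf{Main obstacle.} The hardest part is passing the non-smooth max-min through the approximations while keeping a single network rather than one network per action. For the error terms this is fine, since sup-norm non-expansiveness gives uniform-in-$(a,d)$ bounds. For the realisation, the point to check is that after freezing $B_a,B_d$ at $x_j$ the dependence of the max-min value on $x$ is entirely through $z$. The terms $SV^\top B_a(x_j)a$ and $SV^\top B_d(x_j)d$ are then constant shifts independent of $x$, absorbed into $g_j$, so the value is a genuine continuous function of $z$. Continuity of that max-min function of $z$ follows from compactness of $\mathcal{A},\mathcal{D}$ (Berge's maximum theorem), which is what licenses the universal approximation step.
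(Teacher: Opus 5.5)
Your proposal is correct and follows the paper's proof essentially step for step: you linearize $f_0$ and freeze $B_a,B_d$ at the centre $x_j$ to get the $C_2\delta^2$ and $C_1\delta$ terms, write the frozen target as a function of $\mathcal{G}(x_j)x$ with Lipschitz constant $\sqrt{L_{r0}^2+(\gamma L_\phi\|w_{\mathrm{out}}\|)^2}$, truncate $\mathcal{G}(x_j)$ to rank $r$ by Eckart--Young--Mirsky, and realize the result by universal approximation with first-layer weights factoring through the truncated matrix. Your handling of the max-min is more careful than the paper's: you define $g_j$ as the max-min of the frozen payoff, which depends on $x$ only through $z$, and pass the linearization errors through by sup-norm non-expansiveness, whereas the paper writes the target for a fixed strategy pair and so glosses over the $x$-dependence of the equilibrium strategies (both arguments also silently drop the hidden-layer biases that the universal approximation theorem supplies).
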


\begin{proof}
\emph{Step 1: low-dimensional feature decomposition of the Bellman target.} For any strategy pair, the Bellman target is
\begin{equation}
y(x)=\mathbb{E}_{a,d}\big[r(x,a,d)+\gamma\,\mathbb{E}_{w}\big[V(x')\big]\big],\qquad x'=f_0(x)+B_a(x)a+B_d(x)d+w.
\label{eq:target}
\end{equation}
Substituting \eqref{eq:nn} and using $W=USV^\top$ with $U^\top U=I_r$, define the \emph{low-dimensional feature map}
\begin{equation}
\eta(x,a,d,w):=SV^\top x'=SV^\top f_0(x)+SV^\top B_a(x)a+SV^\top B_d(x)d+SV^\top w\in\mathbb{R}^r,
\label{eq:feature}
\end{equation}
so that
\begin{equation}
V(x')=w_{\mathrm{out}}^\top\phi\big(U\,\eta(x,a,d,w)\big).
\label{eq:Vfeature}
\end{equation}
The feature $\eta$ compresses all action influence by $SV^\top$ into the \emph{existing} $r$-dimensional subspace of the current factorization (see Remark~\ref{rem:rankgrowth}); the genuinely \emph{new} directions of state dependence therefore come not from the action channels but from the local gradient of the autonomous dynamics and the low-dimensional reward projection $C_r x$.

\emph{Step 2: local linearization and the augmented matrix.} Fix a center $x_j$. By Assumption~\ref{ass:1}(e), $f_0$ is twice smooth on $\mathcal{B}(x_j,\delta)$, so
\begin{equation}
f_0(x)=f_0(x_j)+\nabla f_0(x_j)(x-x_j)+R_2(x;x_j),\qquad \|R_2(x;x_j)\|\leq\tfrac{1}{2}H_{\max}\delta^2,
\label{eq:taylor}
\end{equation}
and by Assumptions~\ref{ass:1}(b) and \ref{ass:2p}, $B_a,B_d$ are $L_B$-Lipschitz in operator norm:
\begin{equation}
\|B_a(x)-B_a(x_j)\|_{\mathrm{op}}\leq L_B\delta,\qquad \|B_d(x)-B_d(x_j)\|_{\mathrm{op}}\leq L_B\delta.
\label{eq:BLip}
\end{equation}
Substituting \eqref{eq:taylor}--\eqref{eq:BLip} into \eqref{eq:feature} yields the locally affine feature map
\begin{equation}
\eta(x,a,d,w)=A_j(x-x_j)+M_a^{(j)}a+M_d^{(j)}d+c_j+SV^\top w+\rho_j(x,a,d),
\label{eq:affine}
\end{equation}
with local matrices
\begin{equation}
A_j:=SV^\top\nabla f_0(x_j),\quad c_j:=SV^\top f_0(x_j),\quad M_a^{(j)}:=SV^\top B_a(x_j),\quad M_d^{(j)}:=SV^\top B_d(x_j),
\label{eq:localmats}
\end{equation}
and remainder
\begin{equation}
\rho_j(x,a,d):=SV^\top R_2(x;x_j)+SV^\top(B_a(x)-B_a(x_j))a+SV^\top(B_d(x)-B_d(x_j))d,
\label{eq:remainder}
\end{equation}
which, since $\|a\|\leq\sqrt{r_a}$, $\|d\|\leq\sqrt{r_d}$, is uniformly bounded by
\begin{equation}
\|\rho_j(x,a,d)\|\leq\|S\|_2\Big(\tfrac{1}{2}H_{\max}\delta^2+L_B\delta\big(\sqrt{r_a}+\sqrt{r_d}\big)\Big).
\label{eq:rembound}
\end{equation}
Note that the second term is $O(\delta)$ (from the Lipschitz dependence of $B_a,B_d$) while the first is $O(\delta^2)$ (from the Taylor remainder); the two cannot be merged.

\emph{Step 3: explicit low-rank structure of the local Bellman target.} By Assumption~\ref{ass:1}(f), the reward factors as $r(x,a,d)=r_0(C_r x,a,d)$. Define the low-dimensional averaged reward $\bar{r}_0^*(z):=\mathbb{E}_{a,d}[r_0(z,a,d)]$, which is $L_{r0}$-Lipschitz. Substituting \eqref{eq:affine} into \eqref{eq:target}, the Bellman target decomposes on $\mathcal{B}(x_j,\delta)$ as
\begin{equation}
y(x)=\bar{r}_0^*(C_r x)+\gamma\, h_j\big(A_j(x-x_j)\big)+\epsilon_{\mathrm{rem}}(x),
\label{eq:decomp2}
\end{equation}
where $h_j:\mathbb{R}^r\to\mathbb{R}$ is a fixed Lipschitz function constructed from the activation, the low-rank factors, and the noise distribution,
\begin{equation}
h_j(z):=\mathbb{E}_{a,d,w}\big[w_{\mathrm{out}}^\top\phi\big(U(z+M_a^{(j)}a+M_d^{(j)}d+c_j+SV^\top w)\big)\big],
\label{eq:hdef}
\end{equation}
with $L_{h_j}\leq L_\phi\|w_{\mathrm{out}}\|$ (since $U$ is column-orthogonal, $\|U\|_2=1$). The remainder $\epsilon_{\mathrm{rem}}(x)$ is the deviation between the full and the affine feature, and with $\eta_{\mathrm{aff}}$ denoting \eqref{eq:affine} without $\rho_j$,
\begin{equation}
\epsilon_{\mathrm{rem}}(x):=\gamma\,\mathbb{E}_{a,d,w}\big[w_{\mathrm{out}}^\top\phi\big(U\eta(x,a,d,w)\big)-w_{\mathrm{out}}^\top\phi\big(U\eta_{\mathrm{aff}}(x,a,d,w)\big)\big],
\label{eq:remdef}
\end{equation}
so that by the $L_\phi$-Lipschitz property and \eqref{eq:rembound},
\begin{equation}
|\epsilon_{\mathrm{rem}}(x)|\leq\underbrace{\gamma L_\phi\|w_{\mathrm{out}}\|\,\|S\|_2L_B(\sqrt{r_a}+\sqrt{r_d})}_{C_1}\,\delta+\underbrace{\tfrac{1}{2}\gamma L_\phi\|w_{\mathrm{out}}\|\,\|S\|_2H_{\max}}_{C_2}\,\delta^2.
\label{eq:rembound2}
\end{equation}
Equation \eqref{eq:decomp2} identifies the structure: as a function of $x$, $y(x)$ depends on $x$ only through the two low-dimensional affine maps $C_r x\in\mathbb{R}^{r_r}$ (reward) and $A_j x\in\mathbb{R}^r$ (dynamics). Augmenting them gives exactly $\mathcal{G}(x_j)$ of \eqref{eq:augG}, with $\mathrm{rank}(\mathcal{G}(x_j))\leq r+r_r$.

\emph{Step 4: constructive local low-rank approximation.} By \eqref{eq:decomp2} there is a Lipschitz function $F_j:\mathbb{R}^{r+r_r}\to\mathbb{R}$,
\begin{equation}
F_j(z_1,z_2):=\bar{r}_0^*(z_2)+\gamma h_j(z_1-A_jx_j),
\label{eq:Fdef}
\end{equation}
such that on $\mathcal{B}(x_j,\delta)$,
\begin{equation}
y(x)=F_j\big(\mathcal{G}(x_j)\,x\big)+\epsilon_{\mathrm{rem}}(x),
\label{eq:Frep}
\end{equation}
and the Euclidean Lipschitz constant of $F_j$ satisfies
\begin{equation}
L_{F_j}\leq\sqrt{L_{r0}^2+(\gamma L_\phi\|w_{\mathrm{out}}\|)^2},
\label{eq:FLip}
\end{equation}
since the block-gradient norm is bounded by $\sqrt{(\gamma L_{h_j})^2+L_{r0}^2}$. Let $\mathcal{G}(x_j)=U_{\mathcal{G}}\Sigma_{\mathcal{G}}V_{\mathcal{G}}^\top$ be a compact SVD with $k=\mathrm{rank}(\mathcal{G}(x_j))\leq r+r_r$, and truncate to rank $r$:
\begin{equation}
\mathcal{G}_r(x_j):=\sum_{i=1}^{r}\sigma_i u_iv_i^\top=U_r\Sigma_rV_r^\top,\qquad \mathcal{G}_\perp(x_j):=\mathcal{G}(x_j)-\mathcal{G}_r(x_j),
\label{eq:Gtrunc}
\end{equation}
so that $\|\mathcal{G}_\perp(x_j)\|_F=\epsilon_{\mathrm{EYM}}^{(j)}(r_r)$ by \eqref{eq:localEYM}. With the low-dimensional feature $\widetilde{\psi}_j(x):=\Sigma_rV_r^\top x$ and the composite map $\widetilde{F}_j(z):=F_j(U_rz)$ (which satisfies $L_{\widetilde{F}_j}\leq L_{F_j}$ since $U_r$ is column-orthogonal), the truncation error is controlled by
\begin{equation}
\big|F_j(\mathcal{G}(x_j)x)-F_j(\mathcal{G}_r(x_j)x)\big|\leq L_{F_j}\|\mathcal{G}_\perp(x_j)x\|\leq C_{\mathrm{EYM}}\cdot\epsilon_{\mathrm{EYM}}^{(j)}(r_r).
\label{eq:EYMcontrol}
\end{equation}
Finally, by the universal approximation theorem on the compact set $\Sigma_rV_r^\top(\mathcal{X})$, for any $\varepsilon>0$ there exist $\widehat{m}$ and coefficients $\{\alpha_i,\beta_i,c_i\}_{i=1}^{\widehat{m}}$ such that
\begin{equation}
\Big|\widetilde{F}_j(z)-\sum_{i=1}^{\widehat{m}}\alpha_i\,\phi(\beta_i^\top z+c_i)\Big|\leq\varepsilon,
\label{eq:UAT}
\end{equation}
and setting $\widehat{w}_{\mathrm{out},j}=[\alpha_1,\dots,\alpha_{\widehat{m}}]^\top$ and $\widehat{W}_j^{(i)}=\beta_i^\top\Sigma_rV_r^\top$ defines \eqref{eq:localNN} with $\mathrm{rank}(\widehat{W}_j)\leq r\leq r+r_r$ and $|\widetilde{F}_j(\widetilde{\psi}_j(x))-\widehat{V}_j(x)|\leq\varepsilon$. Combining the three bounds \eqref{eq:rembound2}, \eqref{eq:EYMcontrol}, and \eqref{eq:UAT} through the triangle inequality gives \eqref{eq:lem1bound}.
\end{proof}

\begin{remark}[What drives the rank growth]\label{rem:rankgrowth}
The action-channel low-rankness does \emph{not} by itself add representational dimensions: as \eqref{eq:feature} shows, action information is compressed by the feature bottleneck $SV^\top$ into the existing $r$-dimensional subspace. The dimension growth from $r$ to $r+r_r$ in $\mathcal{G}(x_j)$ is driven by the reward projection $C_r$, reflecting the physical fact that security rewards depend on few key performance indicators. Lemma~\ref{lem:1} is a single-step statement, so no $1/(1-\gamma)$ factor appears; that factor enters in Theorem~\ref{thm:2} through geometric accumulation over iterations.
\end{remark}

Lemma~\ref{lem:1} certifies that each Bellman target \emph{can} be captured in a low-rank subspace, but the certified subspace---spanned by the leading right singular vectors of $\mathcal{G}(x_j)$---depends on the local center $x_j$ and drifts from iteration to iteration. A static low-rank factorization cannot track this drift, which motivates the dynamical algorithm of Section~\ref{sec:4}.

\subsection{Global low-rank approximation of the optimal value function}

Lemma~\ref{lem:1} concerns the iterative structure; it remains to certify that the \emph{target itself}, the optimal value function $V^*$, is well approximated by low-rank networks. For $W\in\mathbb{R}^{m\times n}$ let $\Pi_{\mathcal{M}_r}(W)=\sum_{i=1}^{r}\sigma_i u_iv_i^\top$ denote its best rank-$r$ approximation in Frobenius norm (the truncated SVD, unique up to rotations of equal singular values; Eckart--Young--Mirsky \citep{eckartyoung1936}). Define the best rank-$r$ approximation error of $V^*$ in the network class \eqref{eq:nn} by
\begin{equation}
\widehat{V}^*:=\inf_{\substack{\mathrm{rank}(W)\leq r\\w_{\mathrm{out}}\in\mathbb{R}^m}}\|V^*-V(\cdot;w_{\mathrm{out}},W)\|_\infty .
\label{eq:Vhat}
\end{equation}

\begin{theorem}[Global low-rank approximation error bound]\label{thm:1}
Under Assumptions~\ref{ass:1}--\ref{ass:3p}, let $V_{\mathrm{full}}(x)=w_{\mathrm{out}}^{*\top}\phi(W_{\mathrm{full}}^*x)$ satisfy Assumption~\ref{ass:3p} with $\|V^*-V_{\mathrm{full}}\|_\infty\leq\epsilon_{\mathrm{NN}}$, and let $W_r=\Pi_{\mathcal{M}_r}(W_{\mathrm{full}}^*)$. Then the low-rank network $\widehat{V}_r(x)=w_{\mathrm{out}}^{*\top}\phi(W_rx)$ satisfies
\begin{equation}
\|V^*-\widehat{V}_r\|_\infty\leq\epsilon_{\mathrm{NN}}+L_\phi\|w_{\mathrm{out}}^*\|_2\,R_{\mathcal{X}}\cdot\epsilon_{\mathrm{EYM}}(r),
\label{eq:thm1}
\end{equation}
where $\epsilon_{\mathrm{EYM}}(r):=\sqrt{\sum_{i=r+1}^{\min(m,n)}\sigma_i^2(W_{\mathrm{full}}^*)}$ is the truncation error of the full-rank benchmark.
\end{theorem}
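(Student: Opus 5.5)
The plan is a triangle-inequality argument through the full-rank benchmark, combined with a perturbation bound for the hidden-layer weight. For every $x\in\mathcal{X}$ I would split
\[
|V^*(x)-\widehat{V}_r(x)|\leq|V^*(x)-V_{\mathrm{full}}(x)|+|V_{\mathrm{full}}(x)-\widehat{V}_r(x)|.
\]
By Assumption~\ref{ass:3p}, the first term is at most $\epsilon_{\mathrm{NN}}$ uniformly in $x$. Everything then reduces to bounding the second term, which compares two networks that share the output weights $w_{\mathrm{out}}^*$ and differ only in the hidden matrix, $W_{\mathrm{full}}^*$ versus $W_r$.

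For the second term, Cauchy--Schwarz gives
\[
|V_{\mathrm{full}}(x)-\widehat{V}_r(x)|=|w_{\mathrm{out}}^{*\top}(\phi(W_{\mathrm{full}}^*x)-\phi(W_rx))|\leq\|w_{\mathrm{out}}^*\|_2\,\|\phi(W_{\mathrm{full}}^*x)-\phi(W_rx)\|_2.
\]
Since $\phi$ acts element-wise and is $L_\phi$-Lipschitz, each coordinate difference is bounded by $L_\phi$ times the corresponding coordinate of $(W_{\mathrm{full}}^*-W_r)x$. Summing squares yields $\|\phi(W_{\mathrm{full}}^*x)-\phi(W_rx)\|_2\leq L_\phi\|(W_{\mathrm{full}}^*-W_r)x\|_2$.

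Next I would bound $\|(W_{\mathrm{full}}^*-W_r)x\|_2\leq\|W_{\mathrm{full}}^*-W_r\|_2\|x\|_2$. Assumption~\ref{ass:1}(g) gives $\|x\|_2\leq R_{\mathcal{X}}$ on $\mathcal{X}$. Because $\|\cdot\|_2\leq\|\cdot\|_F$, Eckart--Young--Mirsky applied to $W_r=\Pi_{\mathcal{M}_r}(W_{\mathrm{full}}^*)$ gives
\[
\|W_{\mathrm{full}}^*-W_r\|_F=\sqrt{\textstyle\sum_{i>r}\sigma_i^2(W_{\mathrm{full}}^*)}=\epsilon_{\mathrm{EYM}}(r).
\]
One could sharpen this to $\sigma_{r+1}(W_{\mathrm{full}}^*)$, but the stated Frobenius form is what \eqref{eq:thm1} requires. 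Taking the supremum over $x$ and combining the pieces yields \eqref{eq:thm1}.

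No step is a genuine obstacle; the only care needed is the element-wise Lipschitz step, which must go through the Euclidean vector norm so that the constant is $L_\phi$ with no dimension factor. I also note that the rank constraint is met automatically, since $\mathrm{rank}(W_r)\leq r$. Hence $\widehat{V}_r$ is feasible for \eqref{eq:Vhat}, so as a by-product the same bound also controls the best rank-$r$ error $\widehat{V}^*$.
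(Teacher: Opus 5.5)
Your proposal is correct and matches the paper's proof: the paper uses the same triangle inequality through $V_{\mathrm{full}}$, then Cauchy--Schwarz, the element-wise $L_\phi$-Lipschitz bound, $\|(W_{\mathrm{full}}^*-W_r)x\|_2\leq\|W_{\mathrm{full}}^*-W_r\|_F\,R_{\mathcal{X}}$, and Eckart--Young--Mirsky. The only difference is cosmetic: you pass through the spectral norm before bounding it by the Frobenius norm, while the paper goes to the Frobenius norm directly, and it makes the same remark about the sharper $\sigma_{r+1}$ alternative.
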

\begin{proof}
By the Eckart--Young--Mirsky theorem, $\|W_{\mathrm{full}}^*-W_r\|_F=\epsilon_{\mathrm{EYM}}(r)$. Hence for every $x\in\mathcal{X}$,
\begin{align}
|V_{\mathrm{full}}(x)-\widehat{V}_r(x)|
&\leq\|w_{\mathrm{out}}^*\|_2\,\|\phi(W_{\mathrm{full}}^*x)-\phi(W_rx)\|_2
\leq L_\phi\|w_{\mathrm{out}}^*\|_2\,\|(W_{\mathrm{full}}^*-W_r)x\|_2\notag\\
&\leq L_\phi\|w_{\mathrm{out}}^*\|_2\,\|W_{\mathrm{full}}^*-W_r\|_F\,\|x\|_2
\leq L_\phi\|w_{\mathrm{out}}^*\|_2\,R_{\mathcal{X}}\cdot\epsilon_{\mathrm{EYM}}(r),
\label{eq:thm1step}
\end{align}
and \eqref{eq:thm1} follows from the triangle inequality with $\|V^*-V_{\mathrm{full}}\|_\infty\leq\epsilon_{\mathrm{NN}}$.
\end{proof}

Theorem~\ref{thm:1} is the theoretical license for low-rank deployment: the truncation loss is governed by the tail singular values of a full-rank benchmark and is therefore \emph{small whenever the trained dynamics concentrate on few singular directions}. The bound is stated with the Frobenius error $\epsilon_{\mathrm{EYM}}(r)$, rather than the tighter spectral quantity $\sigma_{r+1}(W_{\mathrm{full}}^*)$, because the former measures the total information loss and is directly measurable in experiments; since the bound is benchmark-dependent, encouraging fast singular-value decay during full-rank training (e.g., through weight decay) reduces $\epsilon_{\mathrm{EYM}}(r)$ and hence the low-rank loss. We stress what Theorem~\ref{thm:1} does and does not claim: it is a \emph{truncation certificate}, not a converse---its input is the benchmark tail $\epsilon_{\mathrm{EYM}}(r)$, and it does not by itself prove that the physical coupling of Assumption~\ref{ass:2p} forces that tail to be small ($B_a,B_d$ do not appear in its statement). The physical-to-global link is instead established jointly: Lemma~\ref{lem:1} certifies the structure step-wise at each Bellman target through $\mathcal{G}(x_j)$, and the fast decay of the benchmark tail---the same structure, observed globally---is measured directly in Section~\ref{sec:5}; a purely theorem-level derivation of global spectral decay without the benchmark as intermediary remains open (Section~\ref{sec:6}). Together, Lemma~\ref{lem:1} and Theorem~\ref{thm:1} answer the two questions raised at the outset: the low-rank structure of ICS dynamics propagates through the Bellman operator, and the induced truncation loss is explicitly bounded. What they do not yet provide is a \emph{computationally realizable} procedure that exploits this structure under drifting subspaces with certified convergence---this is the subject of Section~\ref{sec:4}.

% ============================================================
% SECTION 4: DLR-NE ALGORITHM AND GUARANTEES
% ============================================================
\section{Dynamical low-rank Nash equilibrium computation}\label{sec:4}

Adopting the standard dynamical low-rank mechanisms of basis augmentation, spectral regularization, and truncated-SVD retraction~\citep{koch2007dynamical,lubich2014projector,ceruti2022unconventional}, we assemble them into the DLR-NE algorithm, whose problem-specific justification and certificates are novel. Lemma~\ref{lem:1} certifies that Bellman targets are locally low-rank, and Theorem~\ref{thm:1} certifies that the equilibrium value is globally approximable at rank $r$; neither result, however, yields an algorithm. The obstruction is specific to the dynamic setting: the certified local subspaces---spanned by the leading right singular vectors of $\mathcal{G}(x_j)$---depend on the local center $x_j$ \emph{and} on the current iterate, so they drift along the trajectory of value iteration. A static basis, once fixed, cannot align with all of them, and maintaining a dictionary of per-state experts is infeasible online because each local SVD depends on the current iterate and changes at every outer step. This section resolves the obstruction with three coupled mechanisms, states the algorithm, and establishes its three certificates: convergence (Theorem~\ref{thm:2}), complexity (Theorem~\ref{thm:3}), and robustness (Corollary~\ref{cor:1}).

\emph{Basis augmentation} temporarily expands the search space from rank $r$ to rank $2r\geq r+r_r$, giving a single global least-squares problem enough degrees of freedom to discover all state-dependent local directions adaptively (motivated by the local certificate of Lemma~\ref{lem:1}); \emph{spectral regularization} of the core matrix controls its condition number during inner training (Proposition~\ref{prop:cond}); \emph{truncated-SVD retraction} projects the rank-$2r$ iterate back onto the rank-$r$ manifold, keeping the per-iteration cost at $\mathcal{O}(nr^2)$ and bounding the incurred error by the EYM tail. The full procedure is given in Algorithm~\ref{alg:dlrne}; it realizes the program of Problem~\ref{prob:P} under the algorithmic conditions of Assumption~\ref{ass:4}.

\begin{algorithm}[htbp]
\caption{DLR-NE: Dynamical Low-Rank Nash Equilibrium Computation}\label{alg:dlrne}
\KwIn{Game $\mathcal{G}=(\mathcal{X},\mathcal{A},\mathcal{D},P,r)$; truncation rank $r$ (augmentation rank $2r$); inner steps $s^*$; batch $N_b$; discount $\gamma$; weight $\beta$; step size $\eta$; tolerance $\epsilon_{\mathrm{tol}}$}
\KwOut{Low-rank NN parameters $\theta = (U, S, V, w_{\mathrm{out}})$ approximating $V^*$; MTD policy $\pi_D$}
Initialize orthonormal $U^{(0)} \in \mathbb{R}^{m\times r}$, $V^{(0)} \in \mathbb{R}^{n\times r}$; core $S^{(0)} \in \mathbb{R}^{r\times r}$; $w_{\mathrm{out}}^{(0)} \in \mathbb{R}^m$; $k \leftarrow 0$\;
\Repeat{$\max_i \left\lvert V(x_i;\theta^{(k)}) - y_i \right\rvert \leq \epsilon_{\mathrm{tol}}$ \emph{or} $k = k_{\max}$}{
  \nl\textbf{Sample} batch $\{x_i\}_{i=1}^{N_b} \subset \mathcal{X}$ (Latin hypercube or replay buffer); covering radius $\delta$\;
  \nl\label{line:targets}\textbf{Bellman targets:} for each $x_i$, $y_i = \max_{p_D}\min_{p_A} \mathbb{E}\bigl[r(x_i,a,d) + \gamma\,V^{(k)}(x')\bigr]$, $x' \sim P(\cdot|x_i,a,d)$, with $V^{(k)}$ evaluated through the factored forward pass \eqref{eq:forward}\;
  \nl\textbf{Gradients:} $\mathcal{L}_{\mathrm{raw}} = \tfrac{1}{N_b}\sum_i \bigl(V(x_i;\theta^{(k)}) - y_i\bigr)^2$; $\nabla_U \mathcal{L}_{\mathrm{raw}}$, $\nabla_V \mathcal{L}_{\mathrm{raw}}$ via the factored form \eqref{eq:forward}\;
  \nl\label{line:augment}\textbf{Basis augmentation:} $[\widehat{U}, \cdot] \leftarrow \mathrm{QR}\big([U^{(k)}, \nabla_U \mathcal{L}_{\mathrm{raw}}]\big)$; $[\widehat{V}, \cdot] \leftarrow \mathrm{QR}\big([V^{(k)}, \nabla_V \mathcal{L}_{\mathrm{raw}}]\big)$, $\widehat{U}, \widehat{V} \in \mathbb{R}^{\cdot \times 2r}$\;
  \nl\textbf{Coefficient lift:} $\widehat{S}_0 \leftarrow \widehat{U}^\top U^{(k)} S^{(k)} V^{(k)\top} \widehat{V} \in \mathbb{R}^{2r\times 2r}$; $\widehat{w}_0 \leftarrow w_{\mathrm{out}}^{(k)}$\;
  \nl\label{line:inner}\textbf{Inner loop} ($s = 0, \dots, s^*-1$): $\widehat{\mathcal{L}}(\widehat{S}, \widehat{w}) = \tfrac{1}{N_b}\sum_i \bigl(\widehat{w}^\top\phi(\widehat{U}\widehat{S}\widehat{V}^\top x_i) - y_i\bigr)^2 + \beta\,\mathcal{R}(\widehat{S})$; $(\widehat{S}_{s+1}, \widehat{w}_{s+1}) \leftarrow (\widehat{S}_s, \widehat{w}_s) - \eta\,\nabla_{\widehat{S},\widehat{w}}\,\widehat{\mathcal{L}}(\widehat{S}_s, \widehat{w}_s)$ (fixed-step GD; $\nabla\mathcal{R}$ via Proposition~\ref{prop:grad})\;
  \nl\label{line:retract}\textbf{Truncated-SVD retraction:} $[P, \Sigma, Q] \leftarrow \mathrm{svd}_r(\widehat{S}_{s^*})$ (rank-$r$ truncation of the $2r\times 2r$ core)\;
  \nl\textbf{Retract:} $U^{(k+1)} \leftarrow \widehat{U}P$; $V^{(k+1)} \leftarrow \widehat{V}Q$; $S^{(k+1)} \leftarrow \Sigma$; $w_{\mathrm{out}}^{(k+1)} \leftarrow \widehat{w}_{s^*}$\;
  \nl\label{line:extract}\textbf{Nash extraction:} $\pi_D^{(k+1)}(x) \in \arg\max_{p_D}\min_{p_A} \mathbb{E}_{a,d}\!\left[Q^{(k+1)}(x,a,d)\right]$, $Q^{(k+1)}$ through \eqref{eq:forward}\;
  $k \leftarrow k+1$\;
}
\Return $\theta^{(k)}$, $\pi_D^{(k)}$\;
\end{algorithm}

Three remarks on the design. First, each Bellman target on line~\ref{line:targets} is a low-dimensional saddle problem: $Q_x(a,d)$ depends on $(a,d)$ through the $r$-dimensional bottleneck, and $\mathcal{A}=[0,1]^{r_a}$, $\mathcal{D}=[0,1]^{r_d}$ with $r_a,r_d\ll n$, so, under the discretized-LP route of Assumption~\ref{ass:1}(h), it is solved at $\mathcal{O}(r^3)$ cost in the action-space granularity, independent of $n$; for a general non-convex--concave $Q_x$ we write the extraction cost as $C_{\mathrm{ext}}(r_a,r_d,\epsilon_{\mathrm{mm}})$, still independent of $n$; the next-state expectation is estimated by Monte Carlo rollouts of \eqref{eq:dyn} sharing the inner-loop noise ensemble. Second, on the fixed augmented basis the effective parameters are $(\widehat{S},\widehat{w})\in\mathbb{R}^{2r\times 2r}\times\mathbb{R}^{m}$, and the state dimension enters only through the fixed products $\widehat{V}^\top x_i\in\mathbb{R}^{2r}$; the spectral term is differentiated in closed form at $\mathcal{O}(r^3)$ cost. Third, because the value function is available in factored form at every outer step, the induced MTD policy is recomputed on the fly---essential for MTD, whose reconfiguration decisions must track the evolving attack surface. The three mechanisms themselves follow the standard dynamical low-rank playbook \citep{koch2007dynamical,lubich2014projector,ceruti2022unconventional,DBLP:conf/nips/SchotthoferYS25}, with the retraction a metric projection in the sense of \citet{absil2012projection}; the contribution of this paper is their problem-specific analysis under non-stationary Bellman targets.

\begin{assumption}[Algorithmic conditions]\label{ass:4}
\quad
\begin{itemize}[leftmargin=1.4em,itemsep=0.1em]
\item[(a)] \emph{(Boundedness.)} The inner-loop iterates remain compact: $\left\lVert \widehat{S}^{(s)} \right\rVert_F \leq S_{\max} < \infty$ for all $s$, as guaranteed by the coercivity of $\mathcal{R}$ and the monotonicity of gradient descent (the trajectory stays in the level set determined by the initialization and $\beta$).
\item[(b)] \emph{(Inner-loop accuracy.)} The inner loop, halted after $s^*$ steps, outputs $\widehat{S}^{(s^*)}$ achieving empirical loss within $\varepsilon_{\mathrm{inner}}$ of the infimum over the augmented class.
\item[(c)] \emph{(Basis-augmented representability.)} The augmented class $\mathcal{H}^{(k)}$ of rank $2r$ uniformly approximates the Bellman target: there exists $f^* \in \mathcal{H}^{(k)}$ with $\left\lVert f^* - \mathcal{T} V^{(k)} \right\rVert_\infty \leq \epsilon_{\mathrm{app}}$. This is a \emph{global} requirement, motivated but not strictly implied by the local certificate of Lemma~\ref{lem:1}: the latter bounds the local effective rank by $r + r_r \leq 2r$ on every ball, yet the certified local subspaces drift with the center and with the iteration and need not be simultaneously captured by one fixed global $2r$-dimensional class; the residual misalignment is exactly error source~(i) of Theorem~\ref{thm:2}, absorbed here into $\epsilon_{\mathrm{app}}$ and kept small in practice by basis augmentation (Section~\ref{sec:5}).
\end{itemize}
\end{assumption}

\subsection{Convergence}

We first convert Lemma~\ref{lem:1}'s local representability into a uniform single-step fitting bound; the complete proofs of Lemma~\ref{lem:bridge}, Lemma~\ref{lem:2}, and Theorem~\ref{thm:2} are collected in Appendix~\ref{app:thm2}. Throughout, $L_{\mathcal{T}}$ denotes the Lipschitz constant of the Bellman target $\mathcal{T}V^{(k)}$ on $\mathcal{X}$, derived as a system constant in Section~2.2.

\begin{lemma}[Bridge: empirical $L_2$ to uniform $L_\infty$]\label{lem:bridge}
Let $\mathcal{C}_\delta = \{x_j\}_{j=1}^{N_\delta}$ be a $\delta$-covering of $\mathcal{X}$ with $N_\delta \leq (1 + 2R_{\mathcal{X}}/\delta)^n$. Let $g$ be the Bellman target at the current iterate and let $\widehat{f}$ be the empirical risk minimizer over $\mathcal{H}^{(k)}$ on $\mathcal{C}_\delta$, achieving
$\frac{1}{N_\delta}\sum_{j=1}^{N_\delta} \bigl(\widehat{f}(x_j) - g(x_j)\bigr)^2 \leq \epsilon_{\mathrm{app}}^2 + \varepsilon_{\mathrm{inner}}$.
Suppose that on $\mathcal{C}_\delta$, (i) there exists $f^* \in \mathcal{H}^{(k)}$ with $\left\lVert f^* - g \right\rVert_\infty \leq \epsilon_{\mathrm{app}}$ (Assumption~\ref{ass:4}(c)); and (ii) $\widehat{f}$ and $g$ are $L_{\mathrm{fun}}$- and $L_{\mathcal{T}}$-Lipschitz. Then
\begin{equation}\label{eq:bridge}
\left\lVert \widehat{f} - g \right\rVert_\infty \;\leq\; \sqrt{N_\delta}\,\epsilon_{\mathrm{app}} \;+\; \sqrt{N_\delta\,\varepsilon_{\mathrm{inner}}} \;+\; \bigl(L_{\mathrm{fun}} + L_{\mathcal{T}}\bigr)\,\delta .
\end{equation}
\end{lemma}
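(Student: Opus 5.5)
The plan has two stages. First we bound the error on the covering points, where the empirical $L_2$ guarantee can be converted into a pointwise bound. Then we extend that bound to all of $\mathcal{X}$ using the Lipschitz continuity of $\widehat f$ and $g$.

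\emph{Step 1 (from empirical mean to maximum on $\mathcal{C}_\delta$).} Write $e_j:=\widehat f(x_j)-g(x_j)$. A sum of nonnegative terms dominates each of its terms, so
\begin{equation*}
\max_j e_j^2\le\sum_{j=1}^{N_\delta}e_j^2=N_\delta\cdot\frac{1}{N_\delta}\sum_{j=1}^{N_\delta}e_j^2\le N_\delta\bigl(\epsilon_{\mathrm{app}}^2+\varepsilon_{\mathrm{inner}}\bigr).
\end{equation*}
Taking square roots and using $\sqrt{u+v}\le\sqrt u+\sqrt v$ gives
\begin{equation*}
\max_j|e_j|\le\sqrt{N_\delta}\,\epsilon_{\mathrm{app}}+\sqrt{N_\delta\,\varepsilon_{\mathrm{inner}}}.
\end{equation*}
These are exactly the first two terms of \eqref{eq:bridge}. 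Hypothesis (i) is what makes the premise $\epsilon_{\mathrm{app}}^2+\varepsilon_{\mathrm{inner}}$ consistent. Since $f^*\in\mathcal{H}^{(k)}$ has empirical loss at most $\epsilon_{\mathrm{app}}^2$, the infimum over the class is at most $\epsilon_{\mathrm{app}}^2$. Assumption~\ref{ass:4}(b) then places the computed minimizer within $\varepsilon_{\mathrm{inner}}$ of that infimum. I would note this in one line and then take the stated empirical bound as given. The bound has a crude $\sqrt{N_\delta}$ factor, and this is the price of using only the empirical mean; no concentration argument is attempted.

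\emph{Step 2 (extension off the grid).} Fix any $x\in\mathcal{X}$. By the covering property there is some $x_j$ with $\|x-x_j\|\le\delta$. The triangle inequality gives
\begin{equation*}
|\widehat f(x)-g(x)|\le|\widehat f(x)-\widehat f(x_j)|+|e_j|+|g(x_j)-g(x)|.
\end{equation*}
By hypothesis (ii), the first and third terms are at most $L_{\mathrm{fun}}\delta$ and $L_{\mathcal{T}}\delta$ respectively. Combining with Step 1 and taking the supremum over $x$ yields \eqref{eq:bridge}.

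The bound $N_\delta\le(1+2R_{\mathcal{X}}/\delta)^n$ is the standard volumetric covering estimate for a ball of radius $R_{\mathcal{X}}$ (Assumption~\ref{ass:1}(g)). It is needed only for the existence of a finite covering and does not enter the inequality itself.

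The main obstacle is interpretive rather than technical. The Lipschitz property in hypothesis (ii) must hold on all of $\mathcal{X}$, not merely on $\mathcal{C}_\delta$ as the phrase ``on $\mathcal{C}_\delta$'' might suggest, because Step 2 evaluates $\widehat f$ and $g$ at arbitrary points. I would read (ii) as a statement on $\mathcal{X}$. This is justified by the remark after Proposition~\ref{prop:contraction}: for $g$, the Lipschitz constant $L_{\mathcal{T}}$ is a system constant, and for $\widehat f$ one has $L_{\mathrm{fun}}\le L_\phi\|\widehat w\|\,\|\widehat S\|_2$, which is finite by Assumption~\ref{ass:4}(a) together with $\|\widehat U\|_2=\|\widehat V\|_2=1$. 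Everything else reduces to the two elementary inequalities above.
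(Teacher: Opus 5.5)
Your proposal is correct and follows the paper's proof essentially step for step. It derives the empirical bound from feasibility of $f^*$ plus Assumption~\ref{ass:4}(b), converts the empirical mean to $\max_j|e_j|\le\sqrt{N_\delta}\,(\epsilon_{\mathrm{app}}+\sqrt{\varepsilon_{\mathrm{inner}}})$ (the paper calls this Cauchy--Schwarz, while you bound the maximum by the sum), and extends off the grid by the triangle inequality with $L_{\mathrm{fun}}\delta$ and $L_{\mathcal{T}}\delta$. Your remark that hypothesis (ii) must hold on all of $\mathcal{X}$ rather than only on $\mathcal{C}_\delta$ is accurate, and the paper's proof relies on that reading without stating it.
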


The two $\sqrt{N_\delta}$ factors are structural: they arise from the Cauchy--Schwarz conversion between average $L_2$ error and uniform $L_\infty$ error and cannot be removed by the existence of a good candidate $f^*$---existence of a well-approximating function does not imply that the optimizer outputs a function uniformly close to it.

\begin{lemma}[Uniform single-step fitting error]\label{lem:2}
Let $g = \mathcal{T} V^{(k)}$, $\widehat{f} \in \mathcal{H}^{(k)}$ the inner-loop output, and $f$ the retracted rank-$r$ iterate, with $W_{\max}$ a uniform bound on the output-layer weights. Then Algorithm~\ref{alg:dlrne} acts as a deterministic operator $\mathcal{A}: \mathcal{F}_r \to \mathcal{F}_r$ whose single-step error satisfies
\begin{equation}\label{eq:lem2}
\boxed{\;\varepsilon_{\mathrm{total}} \;=\; \underbrace{\sqrt{N_\delta}\,\epsilon_{\mathrm{app}}}_{\text{(i) basis-subspace}} \;+\; \underbrace{\sqrt{N_\delta\,\varepsilon_{\mathrm{inner}}}}_{\text{(ii) inner-loop}} \;+\; \underbrace{\bigl(L_{\mathrm{fun}}+L_{\mathcal{T}}\bigr)\delta}_{\text{(iii) sampling geometry}} \;+\; \underbrace{L_\phi W_{\max} R_{\mathcal{X}}\,\epsilon_{\mathrm{retr}}}_{\text{(iv) retraction}} \;+\; \underbrace{\epsilon_{\mathrm{mm}}}_{\text{(v) min-max}},\;}
\end{equation}
where $\epsilon_{\mathrm{retr}} := \sup_{\widehat{S} \in \mathcal{S}^{\mathrm{cpt}}_{2r}} \sqrt{\sum_{i=r+1}^{2r}\sigma_i^2(\widehat{S})}$ over the compact set of admissible cores (nonempty by Assumption~\ref{ass:4}(a)), and $\epsilon_{\mathrm{mm}}$ is the min-max tolerance of Assumption~\ref{ass:1}(h), incurred once per target evaluation; being a deterministic uniform bound $\left\lVert \widehat{g}-g \right\rVert_\infty\leq\epsilon_{\mathrm{mm}}$ on the computed target (see the Synthesis in Appendix~\ref{app:thm2}), it enters directly and is not amplified by the bridge factor $\sqrt{N_\delta}$. That is, $\left\lVert \mathcal{A}V^{(k)} - \mathcal{T} V^{(k)} \right\rVert_\infty \leq \varepsilon_{\mathrm{total}}$.
\end{lemma}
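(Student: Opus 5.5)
The plan is a triangle-inequality decomposition along the chain of objects the algorithm actually produces in one outer step. Write $g=\mathcal{T}V^{(k)}$ for the exact Bellman target and $\widehat{g}$ for the computed target on line~\ref{line:targets}. Let $\widehat{f}\in\mathcal{H}^{(k)}$ be the inner-loop output on the fixed augmented basis, and $f=\mathcal{A}V^{(k)}$ the retracted rank-$r$ network. Then
\[
\|f-g\|_\infty\leq\|f-\widehat{f}\|_\infty+\|\widehat{f}-\widehat{g}\|_\infty+\|\widehat{g}-g\|_\infty .
\]
Each piece is bounded separately and matched to the sources in \eqref{eq:lem2}. Determinism of $\mathcal{A}$ follows if the sample set is the fixed covering $\mathcal{C}_\delta$, the Monte Carlo noise ensemble is held fixed, and the inner loop is fixed-step GD from the lifted core. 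I will state this as a convention, so that $\mathcal{A}$ maps $\mathcal{F}_r$ to $\mathcal{F}_r$ by construction of lines~\ref{line:augment}--\ref{line:retract}.

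\emph{Min-max term (v).} By Assumption~\ref{ass:1}(h), each computed target satisfies $|\widehat{g}(x)-g(x)|\leq\epsilon_{\mathrm{mm}}$ pointwise, uniformly in $x$. This bound therefore enters additively without passing through the empirical-to-uniform conversion.

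\emph{Fitting terms (i)--(iii).} Apply Lemma~\ref{lem:bridge} with target $\widehat{g}$. Assumption~\ref{ass:4}(c) supplies $f^*$ with $\|f^*-g\|_\infty\leq\epsilon_{\mathrm{app}}$. Assumption~\ref{ass:4}(b) supplies empirical loss at most the infimum plus $\varepsilon_{\mathrm{inner}}$, and the infimum is at most $\epsilon_{\mathrm{app}}^2$ because $f^*$ is a candidate. The Lipschitz hypotheses come from two sources. For $\widehat{f}$, use $L_{\mathrm{fun}}\leq L_\phi W_{\max}S_{\max}$ via Assumption~\ref{ass:4}(a) and orthonormality of $\widehat{U},\widehat{V}$. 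For the target, use $L_{\mathcal{T}}$ from Section~2.2.

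There is a subtlety: $\widehat{g}$ need not be Lipschitz, while Lemma~\ref{lem:bridge} is stated for $g$. My first attempt is to apply the bridge with $g$ itself and absorb the target-fitting discrepancy into the empirical loss. If that fails to keep $\epsilon_{\mathrm{mm}}$ out of the $\sqrt{N_\delta}$ factor, I will instead fit against $g$ on the covering and add $\epsilon_{\mathrm{mm}}$ once at the end. Either route yields $\sqrt{N_\delta}\epsilon_{\mathrm{app}}+\sqrt{N_\delta\varepsilon_{\mathrm{inner}}}+(L_{\mathrm{fun}}+L_{\mathcal{T}})\delta$.

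\emph{Retraction term (iv).} This mirrors the proof of Theorem~\ref{thm:1}. The retraction replaces $\widehat{U}\widehat{S}\widehat{V}^\top$ by $\widehat{U}P\Sigma Q^\top\widehat{V}^\top$ and keeps $\widehat{w}_{s^*}$, whose norm is at most $W_{\max}$. By orthonormality, $\|\widehat{U}(\widehat{S}-P\Sigma Q^\top)\widehat{V}^\top\|_F=\|\widehat{S}-\mathrm{svd}_r(\widehat{S})\|_F$, which equals the EYM tail of $\widehat{S}$ and is at most $\epsilon_{\mathrm{retr}}$ since $\widehat{S}$ lies in the compact admissible set. The Lipschitz chain $|w^\top\phi(Ax)-w^\top\phi(Bx)|\leq L_\phi\|w\|\|A-B\|_F\|x\|$ then gives $L_\phi W_{\max}R_{\mathcal{X}}\epsilon_{\mathrm{retr}}$ uniformly on $\mathcal{X}$.

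The main obstacle is the interface in the fitting step. The bridge lemma requires a Lipschitz target and an empirical minimizer over the covering, whereas the algorithm fits noisy, min-max-approximated targets on a batch. I expect to handle this by identifying the batch with $\mathcal{C}_\delta$ and routing the $\epsilon_{\mathrm{mm}}$ discrepancy outside the Cauchy--Schwarz step. Summing the three pieces gives $\|\mathcal{A}V^{(k)}-\mathcal{T}V^{(k)}\|_\infty\leq\varepsilon_{\mathrm{total}}$.
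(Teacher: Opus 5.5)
Your decomposition is the paper's own. You use the split $\|f-\widehat f\|_\infty+\|\widehat f-\widehat g\|_\infty+\|\widehat g-g\|_\infty$, Lemma~\ref{lem:bridge} for (i)--(iii), the Eckart--Young tail of $\widehat S$ for (iv), and $\epsilon_{\mathrm{mm}}$ added once for (v). Your retraction and min-max pieces are correct as written.

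The gap is the fitting step you flag and then leave as a conditional: neither branch yields the boxed bound. The inner loop on line~\ref{line:inner} fits the computed targets $y_i=\widehat g(x_i)$, not $g$. So Assumption~\ref{ass:4}(b), with $f^*$ as competitor, only gives
\[
\frac{1}{N_\delta}\sum_{j=1}^{N_\delta}\bigl(\widehat f(x_j)-\widehat g(x_j)\bigr)^2\;\le\;\bigl(\epsilon_{\mathrm{app}}+\epsilon_{\mathrm{mm}}\bigr)^2+\varepsilon_{\mathrm{inner}},
\]
because $f^*$ is $\epsilon_{\mathrm{app}}$-close to $g$, not to $\widehat g$. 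The min-max error therefore sits inside the empirical residual. The Cauchy--Schwarz step then multiplies it by $\sqrt{N_\delta}$, exactly as it does $\epsilon_{\mathrm{app}}$. Relative to the data being fitted the two errors are indistinguishable, so the remark after Lemma~\ref{lem:bridge} applies verbatim, and nothing lets you route $\epsilon_{\mathrm{mm}}$ around that step. Your first branch fails for this reason. Your second branch needs an empirical-risk bound $\epsilon_{\mathrm{app}}^2+\varepsilon_{\mathrm{inner}}$ against the exact $g$. The algorithm does not produce that; it is an extra hypothesis, namely that Assumption~\ref{ass:4}(b) holds relative to exact targets.

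The paper's appendix in fact adopts your second branch by assertion (``by the inner-loop termination criterion, $\widehat f$ achieves empirical risk at most $\epsilon_{\mathrm{app}}^2+\varepsilon_{\mathrm{inner}}$''). It bounds $\|\widehat f-g\|_\infty$ directly and then adds $\epsilon_{\mathrm{mm}}$. Under that reading (v) is redundant; under the literal reading, the claim that $\epsilon_{\mathrm{mm}}$ is not amplified is unsupported. So you must commit to one of two options.

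\emph{Option 1.} State the exact-target reading of Assumption~\ref{ass:4}(b) explicitly; then your argument is complete.

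\emph{Option 2.} Keep the literal reading and prove the weaker, correct bound $\varepsilon_{\mathrm{total}}+\sqrt{N_\delta}\,\epsilon_{\mathrm{mm}}$. Apply Cauchy--Schwarz to $\widehat f-\widehat g$ on the covering. Pass to $g$ at the covering points via $|\widehat g(x_j)-g(x_j)|\le\epsilon_{\mathrm{mm}}$. Extend off the covering using only the Lipschitz constants of $\widehat f$ and $g$.

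Option 2 also removes your worry about $\widehat g$ not being Lipschitz, since its regularity is never used. Theorem~\ref{thm:2} then goes through unchanged with the enlarged $\varepsilon_{\mathrm{total}}$.
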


\begin{proof}[Sketch]
Decompose $\left\lVert f - g \right\rVert_\infty \leq \left\lVert f - \widehat{f} \right\rVert_\infty + \left\lVert \widehat{f} - \widehat{g} \right\rVert_\infty + \left\lVert \widehat{g} - g \right\rVert_\infty$, where $\widehat{g}$ is the numerically computed target: the middle term is bounded by the Bridge Lemma~\ref{lem:bridge}, giving sources (i)--(iii); the first is the retraction error $\left\lVert f - \widehat{f} \right\rVert_\infty \leq L_\phi \left\lVert w_{\mathrm{out}} \right\rVert_2\, \epsilon_{\mathrm{retr}}\, R_{\mathcal{X}}$, which is (iv); and the last is at most $\epsilon_{\mathrm{mm}}$ by Assumption~\ref{ass:1}(h), which is (v). Details in Appendix~\ref{app:thm2}.
\end{proof}

\begin{theorem}[Convergence of DLR-NE]\label{thm:2}
Let $\{V^{(k)}\}_{k\geq 0}$ be generated by Algorithm~\ref{alg:dlrne}. Under Assumptions~\ref{ass:1}, \ref{ass:2p}, and \ref{ass:4},
\begin{equation}\label{eq:thm2}
\left\lVert V^{(k)} - V^* \right\rVert_\infty \;\leq\; \gamma^k \left\lVert V^{(0)} - V^* \right\rVert_\infty \;+\; \frac{\varepsilon_{\mathrm{total}}}{1-\gamma},
\end{equation}
with $\varepsilon_{\mathrm{total}}$ given by \eqref{eq:lem2}; hence $\limsup_{k\to\infty}\left\lVert V^{(k)} - V^* \right\rVert_\infty \leq \varepsilon_{\mathrm{total}}/(1-\gamma)$.
\end{theorem}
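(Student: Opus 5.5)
The plan is the standard approximate value iteration argument, combining the contraction of Proposition~\ref{prop:contraction} with the uniform single-step error of Lemma~\ref{lem:2}. Write the outer update as $V^{(k+1)}=\mathcal{A}V^{(k)}$. Lemma~\ref{lem:2} gives $\|\mathcal{A}V^{(k)}-\mathcal{T}V^{(k)}\|_\infty\leq\varepsilon_{\mathrm{total}}$ at every iteration. Set $e_k:=\|V^{(k)}-V^*\|_\infty$. Using the fixed point identity \eqref{eq:fixedpoint} and the triangle inequality,
\begin{equation*}
e_{k+1}\leq\|\mathcal{A}V^{(k)}-\mathcal{T}V^{(k)}\|_\infty+\|\mathcal{T}V^{(k)}-\mathcal{T}V^*\|_\infty\leq\varepsilon_{\mathrm{total}}+\gamma e_k .
\end{equation*}

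Unrolling this scalar recursion by induction on $k$ gives
\begin{equation*}
e_k\leq\gamma^k e_0+\varepsilon_{\mathrm{total}}\sum_{i=0}^{k-1}\gamma^i\leq\gamma^k e_0+\frac{\varepsilon_{\mathrm{total}}}{1-\gamma}.
\end{equation*}
This is \eqref{eq:thm2}. Since $\gamma<1$, the first term vanishes as $k\to\infty$, which yields the $\limsup$ statement.

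The main obstacle is making sure Lemma~\ref{lem:2} really applies \emph{uniformly in $k$}, since $\varepsilon_{\mathrm{total}}$ must not depend on the iterate. Three things have to be checked.
\begin{itemize}
\item[(a)] Each iterate $V^{(k)}$ lies in $\mathcal{C}(\mathcal{X})$, so that the contraction of Proposition~\ref{prop:contraction} applies. This holds because the network \eqref{eq:forward} is continuous.
\item[(b)] The Lipschitz constants $L_{\mathrm{fun}}$ and $L_{\mathcal{T}}$, and the weight bound $W_{\max}$, entering \eqref{eq:lem2} are uniform over iterations. For this I would invoke Assumption~\ref{ass:4}(a): $\|\widehat{S}\|_F\leq S_{\max}$ together with $\|U\|_2=\|V\|_2=1$ gives $L_{\mathrm{fun}}\leq L_\phi W_{\max}S_{\max}$. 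The remark at the end of Section~2.2 then makes $L_{\mathcal{T}}=L_r+\gamma L_f L_\phi W_{\max}S_{\max}$ a system constant.
\item[(c)] $\epsilon_{\mathrm{app}}$ holds at every $k$ by Assumption~\ref{ass:4}(c), and $\epsilon_{\mathrm{retr}}$ is defined as a supremum over the compact admissible core set, so it is iterate-independent.
\end{itemize}

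With these uniformities in place the recursion holds with a fixed $\varepsilon_{\mathrm{total}}$, and the rest is routine.
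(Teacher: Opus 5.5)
Your proposal is correct and is essentially the paper's proof. You define the one-step operator $\mathcal{A}$, take $\|\mathcal{A}V^{(k)}-\mathcal{T}V^{(k)}\|_\infty\leq\varepsilon_{\mathrm{total}}$ from Lemma~\ref{lem:2}, combine it with $V^*=\mathcal{T}V^*$ and the $\gamma$-contraction to get $e_{k+1}\leq\varepsilon_{\mathrm{total}}+\gamma e_k$, and unroll the geometric sum. Your uniformity-in-$k$ checks go beyond what the paper writes, since it simply applies Lemma~\ref{lem:2} at every $k$. Note, though, that Assumption~\ref{ass:4}(a) bounds only the core, so the uniform output-weight bound $W_{\max}$ is a standing hypothesis of Lemma~\ref{lem:2} rather than something that assumption supplies.
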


\begin{proof}[Sketch]
By Lemma~\ref{lem:2} and the contraction of $\mathcal{T}$ (Proposition~\ref{prop:contraction}), $\left\lVert V^{(k+1)} - V^* \right\rVert_\infty \leq \varepsilon_{\mathrm{total}} + \gamma\left\lVert V^{(k)} - V^* \right\rVert_\infty$; iterating yields \eqref{eq:thm2}. The full proof, including the covering-number balance between $\delta$ and $N_\delta$, is in Appendix~\ref{app:thm2}.
\end{proof}

Theorem~\ref{thm:2} follows the classical approximate-value-iteration error-propagation framework \citep{munos2003,farahmand2010} and is a worst-case certificate, stated with the ambient covering number $N_\delta$ because the augmented basis itself evolves during iteration; once the basis has settled, $N_\delta$ may equivalently be replaced by the covering number of the effective state set $\{\widehat{V}^\top x : x\in\mathcal{X}\}\subset\mathbb{R}^{2r}$: every function in the fixed-basis class $\mathcal{H}^{(k)}$ depends on $x$ only through the $2r$-dimensional feature $\widehat{V}^\top x$, so the class is in bijection with a function class on that set, and uniform convergence over it requires covering only $\{\widehat{V}^\top x\}$ rather than $\mathcal{X}$, whose cardinality grows as $\delta^{-2r}$; the ambient exponent $n$ enters only while the basis itself drifts, which is why the certificate is stated conservatively with $n$. The steady-state budget of \eqref{eq:lem2} decomposes into five sources, each with a practical control knob: the basis--subspace misalignment~(i) shrinks as the physical low-rank coupling strengthens; the inner-loop price~(ii) by longer inner runs; the sampling cost~(iii) is worst-case $\delta^{-n}$ yet controlled in practice by Latin hypercube sampling at constant $N_b$ on the $2r$-intrinsic-dimensional learned representation (Section~\ref{sec:5}); the retraction tail~(iv) through the rank schedule; and the min-max tolerance~(v) below any prescribed level by refining the $\mathcal{A}\times\mathcal{D}$ discretization at $\mathcal{O}(r^3)$ cost, independent of $n$ (Assumption~\ref{ass:1}(h)).

\subsection{Computational complexity}

\begin{theorem}[Per-iteration complexity and speedup]\label{thm:3}
Under Assumptions~\ref{ass:1} and \ref{ass:2p}, suppose $m = \Theta(n)$ and the inner-loop batch size is constant, $B = \Theta(1)$---justified because on the fixed augmented basis the inner loop optimizes only the $2r\times 2r$ core with $\Theta(r^2)$ effective degrees of freedom, decoupled from $n$, while the covering-set size is a statistical parameter already quantified in $\varepsilon_{\mathrm{total}}$ and not counted again in the flop order. Then a single outer iteration of Algorithm~\ref{alg:dlrne} costs
\begin{equation}\label{eq:thm3-cost}
\mathcal{C}_{\mathrm{DLR}}(n,r,s^*) = \mathcal{O}\bigl(nr^2 + r^3 + s^* nr\bigr),
\end{equation}
whereas full-rank FC-NN value iteration at the same batch setting costs $\mathcal{C}_{\mathrm{full}}(n) = \mathcal{O}(n^2)$. Hence for $r \ll n$ and $s^* = \mathcal{O}(r)$,
\begin{equation}\label{eq:thm3-speedup}
\boxed{\;\frac{\mathcal{C}_{\mathrm{full}}(n)}{\mathcal{C}_{\mathrm{DLR}}(n,r,s^*)} = \Theta\!\left(\frac{n}{r^2}\right).\;}
\end{equation}
If the full-rank baseline additionally performs periodic dense SVD for spectral diagnostics---common in ICS security monitoring---its cost rises to $\mathcal{O}(n^3)$ and the speedup improves to $\Theta(n^2/r^2)$.
\end{theorem}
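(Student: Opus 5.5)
The plan is a line-by-line flop count of one outer iteration of Algorithm~\ref{alg:dlrne}, with $m=\Theta(n)$ and $B=N_b=\Theta(1)$ throughout. The count rests on the factored forward pass \eqref{eq:forward}: evaluating $V(x;\theta)=w_{\mathrm{out}}^\top\phi(U S (V^\top x))$ costs $\mathcal{O}(nr)$ for $V^\top x$, $\mathcal{O}(r^2)$ for the core, $\mathcal{O}(mr)$ for $U(\cdot)$ and $\mathcal{O}(m)$ for the output, i.e.\ $\mathcal{O}(nr)$ in total.

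\emph{Bellman targets (line~\ref{line:targets}).} Each target needs a constant number of Monte Carlo next-state samples per action pair in the discretization. Each sample is a rollout of \eqref{eq:dyn} followed by a factored evaluation. The min-max is then solved on a $2r$-dimensional bottleneck at cost $C_{\mathrm{ext}}(r_a,r_d,\epsilon_{\mathrm{mm}})$, which is $\mathcal{O}(r^3)$ under the LP route and independent of $n$. I will treat the action grid size and the cost of evaluating $f$ as constants that do not scale with $n$ beyond $\mathcal{O}(nr)$ (the $B_a a+B_d d$ terms cost $\mathcal{O}(n(r_a+r_d))$). This step contributes $\mathcal{O}(nr+r^3)$.

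\emph{Gradients, augmentation, lift, inner loop, retraction.} The gradients $\nabla_U\mathcal{L}_{\mathrm{raw}}$ and $\nabla_V\mathcal{L}_{\mathrm{raw}}$ are outer products of length-$m$ or length-$n$ vectors with $r$-vectors, summed over $B$ samples, at cost $\mathcal{O}(nr)$. QR of the $n\times 2r$ and $m\times 2r$ matrices $[U,\nabla_U]$, $[V,\nabla_V]$ costs $\mathcal{O}(nr^2)$; this is the dominant term. The lift $\widehat U^\top U S V^\top\widehat V$ costs $\mathcal{O}(nr^2)$ via the products $\widehat U^\top U$ and $V^\top\widehat V$ ($2r\times r$ each, $\mathcal{O}(nr^2)$), followed by $\mathcal{O}(r^3)$ core multiplications. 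For the inner loop, precompute $\widehat V^\top x_i$ once at cost $\mathcal{O}(Bnr)$. Each GD step then forms $\widehat S z_i$ ($\mathcal{O}(r^2)$), multiplies by $\widehat U$ ($\mathcal{O}(mr)$), and backpropagates to $\widehat S$ and $\widehat w$ at the same cost. The regularizer gradient from Proposition~\ref{prop:grad} costs $\mathcal{O}(r^3)$. Summed over $s^*$ steps this gives $\mathcal{O}(s^*(nr+r^3))$. The rank-$r$ truncated SVD of the $2r\times 2r$ core costs $\mathcal{O}(r^3)$, and the retraction products $\widehat U P$, $\widehat V Q$ cost $\mathcal{O}(nr^2)$. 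Nash extraction has the same cost as the target step.

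Summing the steps gives $\mathcal{O}(nr^2+r^3+s^*nr+s^*r^3)$. The obstacle here is the $s^*r^3$ term from the regularizer gradient, which is not in \eqref{eq:thm3-cost}. I will first absorb it using $r\ll n$: $s^* r^3\le s^* n r$ whenever $r^2\le n$, and $r\ll n$ is taken in the regime $r^2\lesssim n$ implicit in the claimed speedup. Failing that, I will amortize the spectral gradient, computing it once per outer step and then applying cheap updates. Under $r^2\le n$, the total reduces to \eqref{eq:thm3-cost}.

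\emph{Full-rank baseline and ratio.} One full-rank iteration under the same batch setting costs $\Theta(mn)=\Theta(n^2)$ for forward and backward passes through the dense $W$. For the lower bound, computing $Wx$ for a generic $W$ must touch all $mn$ entries, so the cost is $\Theta(n^2)$. With $s^*=\mathcal{O}(r)$, $\mathcal{C}_{\mathrm{DLR}}=\Theta(nr^2)$, which yields \eqref{eq:thm3-speedup}. For the last claim, a dense SVD of an $m\times n$ matrix costs $\Theta(n^3)$, so the ratio becomes $n^3/(nr^2)=n^2/r^2$.
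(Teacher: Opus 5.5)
Your proposal is correct and follows the paper's proof essentially step for step. It uses the same stage-by-stage flop count: gradients and QR at $\mathcal{O}(nr^2)$, coefficient lift and retraction at $\mathcal{O}(nr^2+r^3)$, the inner loop at $\mathcal{O}(s^*(nr+r^3))$, the dense baseline at $\mathcal{O}(n^2)$, and dense SVD at $\mathcal{O}(n^3)$. It also disposes of the extra $s^*r^3$ term under the same implicit regime $r^2\lesssim n$. The paper absorbs that term into $nr^2$ using $s^*=\mathcal{O}(r)$ and $r=o(\sqrt{n})$, whereas your absorption into $s^*nr$ holds for every $s^*$ and so matches \eqref{eq:thm3-cost} slightly more faithfully.

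Your other additions only make explicit what the paper takes for granted. You charge the Bellman-target rollouts, under your stated $\mathcal{O}(nr)$ cost for evaluating $f$, whereas the paper does not charge this step at all. You also give a lower-bound remark for the dense baseline. The amortization fallback is not needed, and it would change the inner loop of Algorithm~\ref{alg:dlrne} as stated.
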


The complete flop-by-flop derivation is given in Appendix~\ref{app:thm3}; we record three structural remarks. First, storage shrinks from $\Theta(n^2)$ to $\Theta(nr)$ parameters---about $98\%$ reduction at $n=10^3$, $r=10$---critical for online deployment on memory-constrained ICS edge controllers. Second, the hypothesis $s^*=\mathcal{O}(r)$ is a parameter choice, not a structural restriction: the inner problem optimizes only the $2r\times 2r$ core with $\Theta(r^2)$ effective degrees of freedom, so fixed-step gradient descent under the coercive regularizer reaches the accuracy $\varepsilon_{\mathrm{inner}}$ of Assumption~\ref{ass:4}(b) within $s^*=\mathcal{O}\bigl(r^2\log(1/\varepsilon_{\mathrm{inner}})\bigr)$ steps in the worst case, and $\mathcal{O}(r)$ steps suffice once the basis is warm-started; larger $s^*$ degrades the speedup gracefully. Third, unlike supervised DLRA, the min-max of line~\ref{line:extract} would destroy the low-rank advantage if executed in high dimension; confining policy extraction to the coefficient space and solving it through the discretized LP of Assumption~\ref{ass:1}(h) keeps its cost at $\mathcal{O}(r^3)$---in the granularity of the action-space discretization and independent of $n$---so the game extension preserves the $\mathcal{O}(nr^2+r^3)$ order nontrivially. If instead $Q_x$ is a general non-convex--concave function solved without discretization, no polynomial-time global saddle solver exists, and the $r^3$ term should be read as a solver-dependent $C_{\mathrm{ext}}(r_a,r_d,\epsilon_{\mathrm{mm}})$ that remains independent of $n$.

\subsection{Robustness via condition-number control}

The third certificate concerns the physical sensitivity of the induced defense strategy. The inner loop trains the augmented core under the objective $\widehat{\mathcal{L}}_{\mathrm{raw}} + \beta\,\mathcal{R}(\widehat{S})$ with the \emph{spectral regularizer} \citep[adapted from][]{DBLP:conf/nips/SchotthoferYS25}
\begin{equation}\label{eq:reg}
\mathcal{R}(S) := \left\lVert S^\top S - \alpha_S^2 I \right\rVert_F, \qquad \alpha_S^2 := \tfrac{1}{r}\left\lVert S \right\rVert_F^2,
\end{equation}
i.e., $\tfrac{1}{r}\mathcal{R}(S)^2$ is the variance of the squared singular values, unitarily invariant and penalizing spectral anisotropy. The following two propositions, proved in Appendix~\ref{app:props}, make it usable.

\begin{proposition}[Closed-form gradient]\label{prop:grad}
Wherever $\mathcal{R}(S) > 0$, $\nabla \mathcal{R}(S) = 2S(S^\top S - \alpha_S^2 I)/\mathcal{R}(S)$, computable in $\mathcal{O}(r^3)$ flops without ever forming the full $m\times n$ weight.
\end{proposition}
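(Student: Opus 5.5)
The plan is a direct first-order (differential) computation. Write $M(S):=S^\top S-\alpha_S^2 I$, so that $\mathcal{R}(S)=\sqrt{\operatorname{tr}(M^2)}$, using that $M$ is symmetric. On the open set $\{\mathcal{R}>0\}$ the map $S\mapsto\operatorname{tr}(M(S)^2)$ is a polynomial in the entries of $S$, and $t\mapsto\sqrt{t}$ is smooth for $t>0$. Hence $\mathcal{R}$ is differentiable there and
\[
\mathrm{d}\mathcal{R}=\frac{\operatorname{tr}(M\,\mathrm{d}M)}{\mathcal{R}(S)}.
\]
The task therefore reduces to expressing $\operatorname{tr}(M\,\mathrm{d}M)$ as a Frobenius inner product $\langle G,\mathrm{d}S\rangle$ and reading off $G$.

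\textbf{Step 1 (differential of $M$).} We have $\mathrm{d}M=\mathrm{d}S^\top S+S^\top\mathrm{d}S-\mathrm{d}(\alpha_S^2)\,I$, where $\mathrm{d}(\alpha_S^2)=\tfrac{2}{r}\operatorname{tr}(S^\top\mathrm{d}S)$. Because $M$ is symmetric, the first two terms contribute $2\operatorname{tr}(MS^\top\mathrm{d}S)=2\langle SM,\mathrm{d}S\rangle$. The third term contributes $-\mathrm{d}(\alpha_S^2)\operatorname{tr}(M)$.

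\textbf{Step 2 (the key cancellation).} Show that $\operatorname{tr}(M)=\|S\|_F^2-r\alpha_S^2=0$ by the very definition of $\alpha_S^2$. This step is the crux: the normalisation $\alpha_S^2=\tfrac1r\|S\|_F^2$ is exactly what makes $M$ trace-free, so the dependence of $\alpha_S$ on $S$ drops out of the gradient. Combining the two steps gives $\mathrm{d}\mathcal{R}=\langle 2SM/\mathcal{R}(S),\mathrm{d}S\rangle$, which is the stated formula $\nabla\mathcal{R}(S)=2S(S^\top S-\alpha_S^2 I)/\mathcal{R}(S)$.

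The main subtlety I expect is that in the inner loop the regularizer is applied to the $2r\times 2r$ augmented core $\widehat{S}$. The trace cancellation needs the $r$ in $\alpha_S^2$ to equal the dimension of the core. I would therefore read $r$ in \eqref{eq:reg} as the side length of whichever square core is being regularized, i.e.\ $2r$ for $\widehat{S}$, and state this explicitly. Otherwise an extra term $-\tfrac{2}{r}\operatorname{tr}(M)\,S/\mathcal{R}(S)$ would survive.

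\textbf{Step 3 (cost).} Forming $S^\top S$ costs $\mathcal{O}(r^3)$. Subtracting $\alpha_S^2 I$ costs $\mathcal{O}(r^2)$, as does computing $\|S\|_F^2$. The product $SM$ costs $\mathcal{O}(r^3)$, and $\mathcal{R}(S)=\|M\|_F$ costs $\mathcal{O}(r^2)$. All of these operate on the $r\times r$ (or $2r\times 2r$) core alone; the factors $U,V$ and the $m\times n$ product $USV^\top$ never enter, which gives the claimed $\mathcal{O}(r^3)$ flop count.
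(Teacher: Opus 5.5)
Your proposal is correct and follows essentially the same route as the paper's proof: you differentiate $\tfrac12\|M\|_F^2$, use trace-freeness $\mathrm{tr}(M)=\|S\|_F^2-r\alpha_S^2=0$ to remove the $\alpha_S$-variation, read off $2SM/\mathcal{R}(S)$, and count only core-sized products for the $\mathcal{O}(r^3)$ cost. Your side remark is a correct clarification that the paper leaves implicit: the normalisation in $\alpha_S^2$ must use the side length of the core actually being regularized ($2r$ for $\widehat{S}$ in the inner loop), otherwise a residual term survives.
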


\begin{proposition}[Condition-number bound]\label{prop:cond}
For any $S$ with $\sigma_r(S)>0$,
\[
\kappa(S) \;\leq\; \exp\!\bigl(\mathcal{R}(S)/(\sqrt{2}\,\sigma_r(S)^2)\bigr),
\]
a bound monotone in $\mathcal{R}(S)$ at fixed $\sigma_r(S)$; increasing the weight $\beta$ on $\mathcal{R}(S)$ therefore favors a smaller $\kappa(S) = \sigma_{\max}(S)/\sigma_{\min}(S)$, although $\sigma_r(S)$ itself may shift during training and the proposition alone does not assert pointwise monotonicity in $\beta$.
\end{proposition}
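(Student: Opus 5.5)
The plan is to pass from singular values to the eigenvalues of $S^\top S$, where $\mathcal{R}(S)$ has a clean spectral form. I then bound the spread of the largest and smallest eigenvalues by $\mathcal{R}(S)$, and finally use $\log(1+t)\le t$ to convert a spread bound into a ratio bound.

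\emph{Step 1: spectral form of $\mathcal{R}$.} Write $\lambda_i:=\sigma_i(S)^2$, $i=1,\dots,r$, for the eigenvalues of $S^\top S$, ordered non-increasingly. Since $\|S\|_F^2=\sum_i\lambda_i$, the scalar $\alpha_S^2$ is exactly the mean $\mu:=\tfrac1r\sum_i\lambda_i$. Diagonalize $S^\top S=Q\,\mathrm{diag}(\lambda)\,Q^\top$ and use unitary invariance of the Frobenius norm. This gives
\[
\mathcal{R}(S)^2=\sum_{i=1}^r(\lambda_i-\mu)^2,
\]
which is the variance interpretation stated after \eqref{eq:reg}. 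If $S$ is the $2r\times 2r$ augmented core, the same argument applies with $r$ replaced by the core dimension; the constant in the bound does not depend on the dimension.

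\emph{Step 2: spread bound.} Keep only the two extreme terms of the sum and apply the elementary inequality $u^2+v^2\ge\tfrac12(u-v)^2$ with $u=\lambda_1-\mu$ and $v=\lambda_r-\mu$. This yields
\[
\mathcal{R}(S)^2\ \ge\ (\lambda_1-\mu)^2+(\lambda_r-\mu)^2\ \ge\ \tfrac12(\lambda_1-\lambda_r)^2,
\]
and hence $\lambda_1-\lambda_r\le\sqrt2\,\mathcal{R}(S)$.

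\emph{Step 3: from spread to condition number.} Since $\sigma_r(S)>0$, we have $\lambda_r>0$ and $\kappa(S)^2=\lambda_1/\lambda_r$. Then
\[
\log\kappa(S)=\tfrac12\log\Big(1+\frac{\lambda_1-\lambda_r}{\lambda_r}\Big)\le\frac{\lambda_1-\lambda_r}{2\lambda_r}\le\frac{\sqrt2\,\mathcal{R}(S)}{2\,\sigma_r(S)^2}=\frac{\mathcal{R}(S)}{\sqrt2\,\sigma_r(S)^2}.
\]
Exponentiating gives the claimed bound. Monotonicity in $\mathcal{R}(S)$ at fixed $\sigma_r(S)$ is immediate because $\exp$ is increasing. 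The remark about $\beta$ is only qualitative: a larger $\beta$ lowers the attained $\mathcal{R}$ of the inner-loop minimizer, but nothing controls $\sigma_r$. I would state that caveat rather than try to prove it.

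\emph{Main obstacle.} There is no real obstacle. The only point needing care is recognizing that $\alpha_S^2$ equals the mean of the $\lambda_i$, so that $\mathcal{R}$ measures dispersion of the squared singular values. Once that is seen, Step 2 is a one-line two-term inequality. It also accounts for the $\sqrt2$ and the factor $\tfrac12$ from the square root in $\kappa=\sqrt{\lambda_1/\lambda_r}$, which together give exactly the constant $1/\sqrt2$ in the statement.
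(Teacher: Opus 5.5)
Your proof is correct, and it takes a different route from the paper's at the key step. Both arguments pass through the same intermediate bound $\log\kappa(S)\le(\sigma_1^2-\sigma_r^2)/(2\sigma_r^2)$. The paper gets there by telescoping $\log\kappa=\sum_{i=1}^{r-1}\log(\sigma_i/\sigma_{i+1})$ and bounding each term by $(\sigma_i^2-\sigma_{i+1}^2)/(2\sigma_r^2)$; your single use of $\log(1+t)\le t$ on $\kappa^2=\lambda_1/\lambda_r$ does it in one line.

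The real difference is how the spread $\sigma_1^2-\sigma_r^2$ is bounded by $\mathcal{R}(S)$. The paper sums all deviations and invokes Cauchy--Schwarz. Its displayed step $\sum_i|\sigma_i^2-\alpha_S^2|\le\sqrt{r}\,\bigl(\tfrac1r\sum_i(\sigma_i^2-\alpha_S^2)^2\bigr)^{1/2}=\mathcal{R}(S)$ is too optimistic by a factor $\sqrt{r}$: squared singular values $(3,3,1,1)$ give $4$ on the left and $2$ on the right. Done correctly, that route yields only $\log\kappa(S)\le\sqrt{r}\,\mathcal{R}(S)/(2\sigma_r^2)$, which is weaker than the statement once $r\ge3$.

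Your two-term extraction $(\lambda_1-\lambda_r)^2\le 2\bigl[(\lambda_1-\alpha_S^2)^2+(\lambda_r-\alpha_S^2)^2\bigr]\le 2\mathcal{R}(S)^2$ is exactly what gives the dimension-free constant $1/\sqrt2$. It also loses nothing: equality holds in $\lambda_1-\lambda_r\le\sqrt2\,\mathcal{R}(S)$ when $\lambda_1,\lambda_r$ are symmetric about $\alpha_S^2$ and all other $\lambda_i$ equal it.

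The point you flag as needing care is never actually used. The identities $\|S^\top S-cI\|_F^2=\sum_i(\lambda_i-c)^2$ and $u^2+v^2\ge\tfrac12(u-v)^2$ hold for every scalar $c$, so your bound is valid for any centering. This also makes your remark about the $2r\times 2r$ core independent of how $\alpha_S^2$ is normalized there. Identifying $\alpha_S^2$ with the mean matters only for the variance interpretation.
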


\begin{assumption}[Policy regularity for robustness]\label{ass:5}
\quad
\begin{itemize}[leftmargin=1.4em,itemsep=0.1em]
\item[(a)] For every fixed $(x,a)$, the payoff $Q_x(\cdot,d)$ is $\mu$-strongly concave in $d$, with modulus $\mu>0$ uniform in $x$;
\item[(b)] the reduced reward $r_0$ of Assumption~\ref{ass:1}(f) is $L_{r,a}$- and $L_{r,d}$-Lipschitz in its second and third arguments, respectively;
\item[(c)] the activation $\phi$ is $L_\phi$-Lipschitz and twice continuously differentiable with $\sup_z|\phi''(z)|\leq C_\phi''<\infty$;
\item[(d)] for every $V\in\mathcal{C}(\mathcal{X})$, the expected next-state value $\mathbb{E}_{x'\sim P(\cdot|x,a,d)}[V(x')]$ is jointly continuously differentiable in $(a,d)$, with Jacobian controlled by the Lipschitz constant of $V$ and system constants.
\end{itemize}
\end{assumption}

\begin{corollary}[Game-theoretic robustness]\label{cor:1}
In addition to Assumptions~\ref{ass:1}, \ref{ass:2p}, \ref{ass:4}(a), and \ref{ass:5}, suppose the MTD policy is defined by deterministic greedy extraction
\begin{equation}\label{eq:cor1-policy}
\pi_D(x; \theta) := \arg\max_{d \in \mathcal{D}}\;\min_{a \in \mathcal{A}}\; Q(x,a,d; \theta), \qquad Q(x,a,d;\theta) = r(x,a,d) + \gamma\,\mathbb{E}_{x'}\!\left[w_{\mathrm{out}}^\top \phi(USV^\top x')\right],
\end{equation}
whose unique solution exists by the $\mu$-strong concavity in Assumption~\ref{ass:5}(a). Fix $\pi_A$, and let a Frobenius-bounded perturbation $\Delta S$ of the core induce $\tilde{\theta}$ (with $\tilde{W} = U(S+\Delta S)V^\top$) and $\tilde{\pi}_D$. Then
\begin{equation}\label{eq:cor1-bound}
\boxed{\;\left\lVert V(\cdot;\pi_D,\pi_A) - V(\cdot;\tilde{\pi}_D,\pi_A) \right\rVert_\infty \;\leq\; L_\kappa\,\left\lVert \Delta S \right\rVert_F,\qquad L_\kappa := C_{\mathrm{robust}} \cdot \kappa(S),\;}
\end{equation}
where $C_{\mathrm{robust}} > 0$ depends only on system parameters; and by Proposition~\ref{prop:cond},
\begin{equation}\label{eq:cor1-chain}
L_\kappa \;\leq\; C_{\mathrm{robust}}\exp\!\left(\frac{\mathcal{R}(S)}{\sqrt{2}\,\sigma_r(S)^2}\right).
\end{equation}
\end{corollary}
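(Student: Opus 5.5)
The proof will run as a chain of three Lipschitz estimates: core perturbation $\Delta S$ $\to$ payoff $Q$ $\to$ greedy policy $\pi_D$ $\to$ policy value $V(\cdot;\pi_D,\pi_A)$, ending with Proposition~\ref{prop:cond}.

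\emph{Step 1: payoff perturbation.} First I bound the change in the $(a,d)$-gradient of $Q$ caused by $S\mapsto S+\Delta S$, uniformly in $x$. The reward $r$ does not depend on $\theta$, so only the continuation term changes. For the value itself, $|w_{\mathrm{out}}^\top\phi(U\tilde SV^\top x')-w_{\mathrm{out}}^\top\phi(USV^\top x')|\le L_\phi W_{\max}\|\Delta S\|_F R_{\mathcal{X}}$, using $\|U\|_2=\|V\|_2=1$ and Assumption~\ref{ass:1}(g). The policy, however, depends on $\nabla_d Q$. I will differentiate $\mathbb{E}_{x'}[w_{\mathrm{out}}^\top\phi(USV^\top x')]$ in $d$ through $x'=f_0(x)+B_a(x)a+B_d(x)d+w$, which gives $\gamma\,\mathbb{E}[B_d(x)^\top VS^\top U^\top(\phi'(\cdot)\odot w_{\mathrm{out}})]$. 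Varying $S$ then yields two terms: one linear in $\Delta S$, bounded by $\|B_d\|L_\phi W_{\max}$, and one through $\phi'$, bounded by $C_\phi''\,W_{\max}\|S\|_2R_{\mathcal{X}}$ (Assumption~\ref{ass:5}(c),(d)). The result is
\[
\sup_{x,a,d}\|\nabla_d Q(x,a,d;\tilde\theta)-\nabla_d Q(x,a,d;\theta)\|\le C_Q(1+\|S\|_2)\|\Delta S\|_F .
\]

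\emph{Step 2: policy sensitivity.} This step uses the $\mu$-strong concavity of Assumption~\ref{ass:5}(a) and is where I expect the main obstacle. Because $\pi_D$ is defined by a max--min rather than a plain max, I will work with $g(d):=\min_a Q(x,a,d)$. This function is still $\mu$-strongly concave in $d$, as a minimum of $\mu$-strongly concave functions, but it is not smooth. I therefore plan to use the variational-inequality characterization of the maximizer over the convex set $\mathcal{D}$. Writing the optimality inequalities for $d^*$ and $\tilde d^*$, adding them, and using strong monotonicity of $-\partial g$ gives
\[
\mu\|d^*-\tilde d^*\|\le\sup_{a}\|\nabla_d\tilde Q-\nabla_d Q\|,
\]
so $\|\pi_D(x;\theta)-\pi_D(x;\tilde\theta)\|\le \mu^{-1}C_Q(1+\|S\|_2)\|\Delta S\|_F$.

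This still has the wrong form: it involves $\|S\|_2$, while the claim involves $\kappa(S)$. To convert, I will measure the perturbation relative to the scale of $S$, i.e.\ use $\sigma_r(S)$ as a normalization. Concretely, the estimate $1+\|S\|_2\le(\sigma_r(S)^{-1}+1)\sigma_1(S)$ combined with the compactness bound $\sigma_r(S)\ge\underline\sigma>0$ on the admissible set of Assumption~\ref{ass:4}(a) gives $1+\|S\|_2\lesssim\kappa(S)$, with the constant absorbed into $C_{\mathrm{robust}}$. If that lower bound on $\sigma_r(S)$ is not available, my fallback is to rescale by $\alpha_S$, interpreting $C_{\mathrm{robust}}$ as depending on the admissible-core bound $S_{\max}$.

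\emph{Step 3: value gap.} With $\pi_A$ fixed, both $V(\cdot;\pi_D,\pi_A)$ and $V(\cdot;\tilde\pi_D,\pi_A)$ are fixed points of $\gamma$-contractive policy-evaluation operators $\mathcal{T}^{\pi}$. A standard simulation-lemma argument gives
\[
\|V^{\pi}-V^{\tilde\pi}\|_\infty\le(1-\gamma)^{-1}\sup_x|(\mathcal{T}^{\pi}-\mathcal{T}^{\tilde\pi})V^{\tilde\pi}(x)|.
\]
The right-hand side is controlled by $(L_{r,d}+\gamma L_{\tilde V}L_f)\sup_x\|\pi_D(x)-\tilde\pi_D(x)\|$, using Assumption~\ref{ass:5}(b) and the Lipschitz constant $L_{\mathcal{T}}$-type bound from Section~2.2; since $\mathcal{X}$ is compact, $L_{\tilde V}$ is finite. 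Collecting Steps 1--3, every constant is systemic, which yields \eqref{eq:cor1-bound} with $C_{\mathrm{robust}}=\frac{(L_{r,d}+\gamma L_{\mathcal{T}}L_f)C_Q'}{(1-\gamma)\mu}$.

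\emph{Step 4: exponential form.} Finally, \eqref{eq:cor1-chain} follows immediately by substituting Proposition~\ref{prop:cond} into $L_\kappa$.
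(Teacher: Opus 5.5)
Your skeleton (payoff gradient $\to$ greedy action $\to$ policy value, then Proposition~\ref{prop:cond}) is the paper's. The paper gets the policy sensitivity from the implicit function theorem applied to $\nabla_d[\min_a Q]$, which it asserts is jointly $\mathcal{C}^1$ with Hessian $\preceq-\mu I$; you use strong monotonicity instead.

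The genuine gap is your Step~2 inequality $\mu\|d^*-\tilde d^*\|\le\sup_a\|\nabla_d\tilde Q-\nabla_d Q\|$. Adding the two optimality conditions for $g=\min_aQ(x,a,\cdot)$ and $\tilde g$ only yields $\mu\|d^*-\tilde d^*\|\le\mathrm{dist}\bigl(\tilde\xi,\partial g(\tilde d^*)\bigr)$ for some $\tilde\xi\in\partial\tilde g(\tilde d^*)$. These superdifferentials are convex hulls of $\nabla_dQ(a,\cdot)$ over the \emph{active} attacker sets, and the perturbation shifts those sets. A uniform bound on $\nabla_d\tilde Q-\nabla_dQ$ does not control that shift.

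The inequality is in fact false. Take $\mathcal{A}=\mathcal{D}=[0,1]$, $Q(a,d)=-(d-a)^2$ and $\tilde Q(a,d)=-(d-a)^2+\epsilon a$. Both are $2$-strongly concave in $d$ and $\nabla_d\tilde Q\equiv\nabla_dQ$. Yet $\min_aQ=\min\{-d^2,-(d-1)^2\}$ peaks at $d^*=1/2$, while $\min_a\tilde Q=\min\{-d^2,-(d-1)^2+\epsilon\}$ peaks at $\tilde d^*=(1-\epsilon)/2$. Strong concavity plus your Step~1 value bound alone only give the square-root rate $\|d^*-\tilde d^*\|\le\sqrt{2\sup|\tilde Q-Q|/\mu}$.

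To get a linear bound you have two options. The first is to freeze the attacker's action, i.e.\ work with $G(d;S)=\mathbb{E}_{a\sim\pi_A(x)}Q(x,a,d;S)$, which is smooth and $\mu$-strongly concave. Your monotonicity argument then gives $\mu\|d^*-\tilde d^*\|\le\sup_d\|\nabla_dG(d;S+\Delta S)-\nabla_dG(d;S)\|$, and, unlike the implicit-function route, it also covers maximizers on the boundary of $[0,1]^{r_d}$. The second is to add hypotheses making the attacker's minimizer unique and Lipschitz in $S$ (e.g.\ strong convexity in $a$) and carry the extra term $\|\nabla^2_{ad}Q\|\,\|\partial_Sa^*\|$. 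The paper's proof in effect takes the first option without saying so: it writes $\partial g/\partial S$ as an expectation over $a$ with no $\partial_Sa^*$ contribution.

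Two smaller repairs are also needed.

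Your conversion to $\kappa(S)$ invokes a lower bound $\sigma_r(S)\ge\underline\sigma>0$. Assumption~\ref{ass:4}(a) supplies only the upper bound $\|\widehat S\|_F\le S_{\max}$, and no lower bound is needed. Instead, $1+\|S\|_2=1+\kappa(S)\sigma_r(S)\le\kappa(S)(1+S_{\max})$, since $\kappa(S)\ge1$ and $\sigma_r(S)\le S_{\max}$. This is the paper's step $\sigma_{\max}(S)=\kappa(S)\sigma_{\min}(S)\le\kappa(S)S_{\max}$, i.e.\ your fallback. With this fix your $(1+\|S\|_2)$ form is cleaner than the paper's, which folds the $S$-independent term $L_\phi\|w_{\mathrm{out}}\|_2$ of its derivative bound into $C_1\sigma_{\max}(S)$.

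In Step~3 you rightly keep the dependence of $x'$ on $d$ through $B_d(x)d$, which the paper's Step~4 drops by using a common $x'$ in both Bellman equations. However, the resulting factor is $\sup_x\|B_d(x)\|_{\mathrm{op}}$ (not $L_f$) times a Lipschitz constant of $V^{\tilde\pi}$, and compactness of $\mathcal{X}$ does not make that constant finite. You need either Lipschitz policies with a contraction-type condition on the closed-loop dynamics, or nondegenerate noise $\Sigma_w\succ0$. The latter bounds the continuation difference by $2\|V^{\tilde\pi}\|_\infty\le 2R_{\max}/(1-\gamma)$ times a total-variation distance linear in $\|B_d(x)(d-\tilde d)\|$.
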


\begin{proof}
See Appendix~\ref{app:cor1}. For ReLU activations the derivative in the proof is understood in the Clarke generalized-Jacobian sense \citep{clarke1983}, leaving \eqref{eq:cor1-bound}--\eqref{eq:cor1-chain} unchanged.
\end{proof}

Three remarks complete the robustness chain. The certificate is deliberately scoped: it covers only Frobenius-bounded core perturbations with the factors $(U,V,w_{\mathrm{out}})$ and the attacker policy $\pi_A$ held fixed; factor perturbations, observation or state attacks, plant uncertainty, and the full equilibrium sensitivity under attacker re-optimization are left to future work. First, within this scope, Corollary~\ref{cor:1} translates \emph{adversarial robustness} into \emph{game-theoretic robustness}: the APT's goal is not to perturb observations but to compromise the \emph{execution fidelity} of the MTD policy, and even a small $\Delta S$ injected into the core causes an equilibrium-value loss explicitly governed by $\kappa(S)$. Second, the bound is an a priori worst-case certificate: it guarantees that a finite Lipschitz constant exists and is tunable through $\beta$, without predicting that the constant is attained---Section~\ref{sec:5} shows that the measured sensitivity stays well below the certificate. Its value therefore lies in the existence guarantee and in the tuning direction it prescribes (larger $\beta$ reduces $\mathcal{R}(S)$, hence the certificate, and is observed to reduce $\kappa(S)$ in our experiments), rather than in the precise prediction of measured sensitivities. Third, the inequality yields a \emph{computable robustness budget}: given a tolerable loss $\Delta V_{\max}$ and anticipated attack strength $\left\lVert \Delta S \right\rVert_{\max}$, the defender chooses $\beta$ so that $C_{\mathrm{robust}}\exp\bigl(\mathcal{R}(S)/(\sqrt{2}\sigma_r(S)^2)\bigr)\left\lVert \Delta S \right\rVert_{\max} \leq \Delta V_{\max}$, meeting the physical safety constraint while retaining computational efficiency; if $(\pi_D^*,\pi_A^*)$ is a Nash pair, the same bound quantifies the utility loss from equilibrium deviation, tightest at $\kappa(S)=1$.

% ============================================================
% SECTION 5: NUMERICAL EXPERIMENTS
% ============================================================
\section{Numerical experiments}\label{sec:5}

This section validates the theory of Sections~\ref{sec:3} and~\ref{sec:4} through six numerical experiments, each targeting one \emph{specific, verifiable prediction}; since $V^*$ of the continuous-state game admits no closed form, the converged full-rank network serves as a proxy benchmark, and its intrinsic approximation gap $\epsilon_{\mathrm{NN}}$ is reported as a reference floor. Experiments 1--3 verify Theorems~\ref{thm:1}--\ref{thm:3} on approximation, convergence, and complexity; Experiment 4 verifies the robustness certificate of Corollary~\ref{cor:1}; Experiment 5 quantifies the resulting compression--accuracy trade-off; and Experiment 6 ablates the two algorithmic mechanisms. All experiments are conducted on a nonlinear power-system testbed. Table~\ref{tab:summary} summarizes the theoretical prediction and the measured result of each experiment. All source code is available at \url{https://github.com/tz98lab/dlr-ne-ics-security}.

\begin{table}[t]
\centering
\caption{Summary of the six experiments: theoretical prediction versus measured result.}
\label{tab:summary}
\small
\begin{tabular}{@{}c p{5.9cm} p{6.7cm}@{}}
\toprule
Exp. & Theoretical prediction & Measured result \\
\midrule
1 & Thm.~\ref{thm:1}: truncation error bounded by the EYM certificate; ICS coupling is low-rank & error $3.3\times10^{-2}\!\to\!2.0\times10^{-3}$ for $r{=}5\!\to\!100$, $>2$ orders below the certificate; spectrum cliffs at $i\approx202\approx n/5$ \\
2 & Thm.~\ref{thm:2}: geometric decay $\gamma^k$ then a plateau set by five error sources & two phases on all configurations; $s^*$ dominates the plateau ($20\times$); a priori bound $3\times10^3$ vs.\ measured plateau $\approx2$ (conservative) \\
3 & Thm.~\ref{thm:3}: per-step $\mathcal{O}(nr^2)$, $\Theta(n/r^2)$ speedup & log--log slopes consistent with $\mathcal{O}(nr^2)$ vs.\ $\mathcal{O}(n^2)$; FLOPs speedup $\approx12\times$ at $n=2000$ \\
4 & Cor.~\ref{cor:1}: linear sensitivity with a $\beta$-tunable constant & per-network linear, $R^2\in[0.87,0.98]$, $L_\kappa\in[0.09,0.51]$; $\kappa(S)$: $83.7\to31.0$ as $\beta:0\to1$ \\
5 & Thms.~\ref{thm:1}+\ref{thm:3}: convex compression--accuracy frontier & $r{=}5$: $94\%$ parameter cut at $2.3\%$ utility loss \\
6 & Lem.~\ref{lem:1} + Assump.~\ref{ass:4}(c): both mechanisms necessary & Fixed-Basis stagnates; No-Retraction destabilizes late; full algorithm stable at a low plateau \\
\bottomrule
\end{tabular}
\end{table}

\subsection{Experimental setup}

\paragraph{Testbed.}
Power grids are a canonical instance of ICS: bus voltages and frequencies are regulated by supervisory control loops (SCADA/EMS) over geographically distributed physical dynamics, and they constitute the standard testbed for ICS security studies \citep{etxezarreta2026sdicsmtd,hu2021mtdpower,wang2025mtdsmartgrid}. The testbed is a nonlinear power-system model in the form of \eqref{eq:dyn}--\eqref{eq:decomp}: $n$ bus voltages evolve by an explicit Euler discretization of AC power-flow coupling with saturation to $[0.95,1.05]^n$, state-dependent low-rank attack and defense influence matrices $B_a(x)=\mathrm{diag}(\sigma(x-0.5))\bar B_a$, $B_d(x)=\bar B_d\,\mathrm{softmax}(W_dx)$ with $r_a=3$, $r_d=5$, and stage reward $r(x,a,d)=-\|C_r x\|_2^2-c_a\|a\|_2^2-c_d\|d\|_2^2$ ($r_r=5$). Scales $n\in\{100,500,1000,2000\}$ are realized with random geometric networks and IEEE 118/300-bus topologies. States are sampled by Latin hypercube sampling; the residual sampling effect is absorbed into $\varepsilon_{\mathrm{inner}}$ of Theorem~\ref{thm:2}.

\paragraph{Baselines.}
We compare against three baselines. FC-NN-VI: full-rank value iteration with the same network width, the exact procedure DLR-NE compresses. FC-NN-AVI: its approximate variant at identical sampling settings. Fixed-Basis: an ablation that disables basis augmentation and keeps $(U,V)$ frozen at their initial values. No-Retraction: an ablation that skips the truncated-SVD retraction, so the iterate remains on the rank-$2r$ augmented manifold.

\paragraph{Protocol.}
All experiments use Python/PyTorch on an Intel Xeon Gold 6248R. Each configuration is run with five random seeds and we report mean $\pm$ std, unless stated otherwise; Experiment~1 additionally reports a single representative training run, as its truncation error is evaluated on $10^4$ test states and is dominated by the spectral decay rather than by initialization. FLOPs are counted per Theorem~\ref{thm:3} as the primary complexity metric, with wall-clock time as secondary validation. The inner problems are solved by Adam \citep{kingma2015adam} with learning rate $0.01$; the complete hyperparameter list is deferred to Appendix~\ref{app:hyper}. Termination of Algorithm~\ref{alg:dlrne} is declared on the current batch when the residual $\max_i \left\lvert V(x_i;\theta^{(k)}) - y_i \right\rvert$ falls below $\epsilon_{\mathrm{tol}}$; by Theorem~\ref{thm:2}, such a batch residual is representative of the full-space error up to the sampling-geometry term (iii), the standard approximate-value-iteration practice.

\subsection{Results and analysis}

\begin{figure}[!t]
\centering
\includegraphics[width=0.95\textwidth,height=0.62\textheight,keepaspectratio]{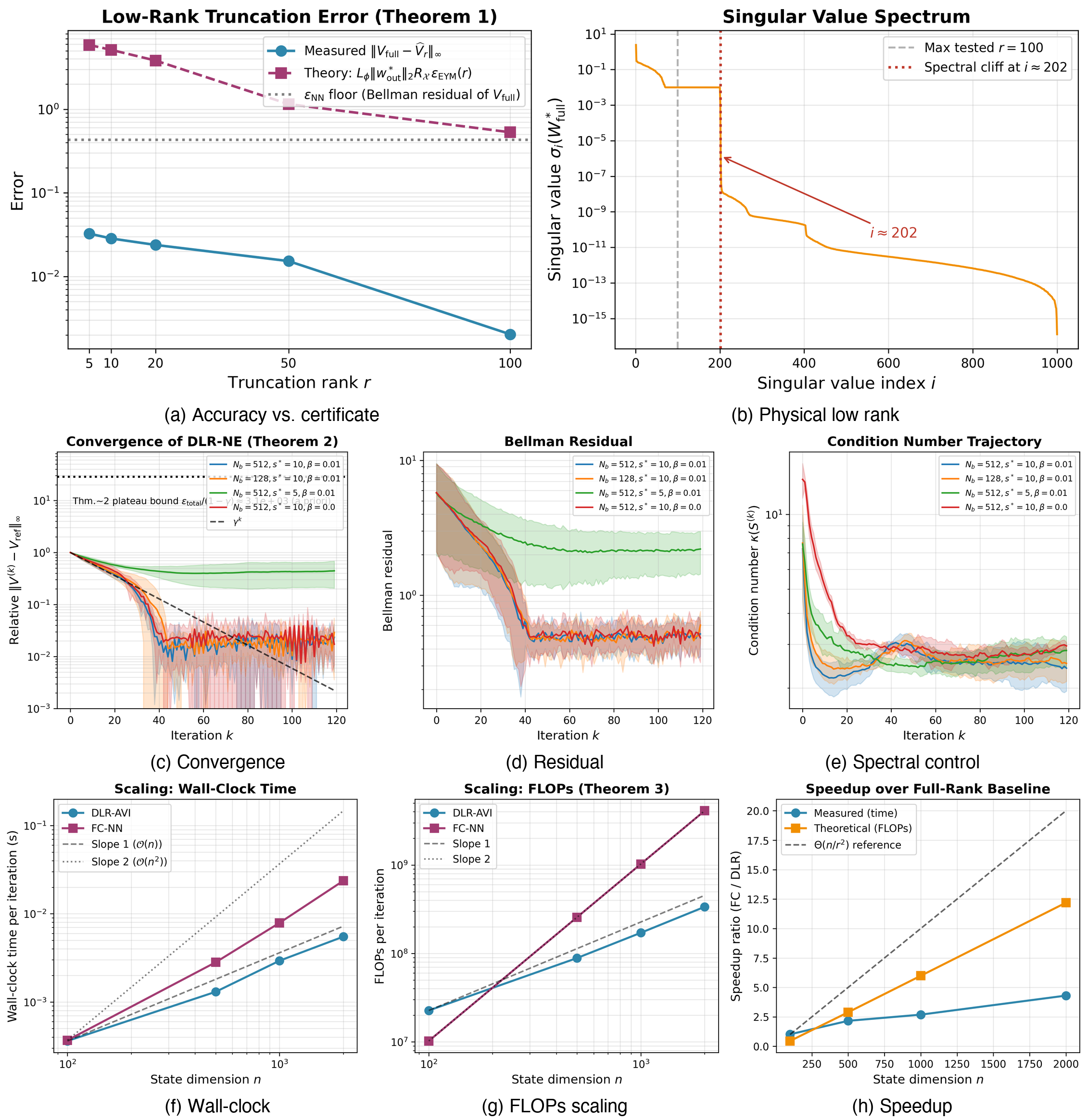}
\caption{Experiments 1--3 at $n=m=1000$. (a),(b) Truncation error and singular-value spectrum of the trained full-rank benchmark: the error sits two orders of magnitude below its certificate, and the spectrum exhibits a sharp cliff at $i\approx 202\approx n/5$, evidencing the physical low-rank structure of ICS coupling. (c)--(e) Convergence of DLR-NE: geometric decay along the $\gamma^k$ envelope into a plateau governed by the inner-loop budget $s^*$, with the a priori bound three orders of magnitude conservative. (f)--(h) Complexity scaling: fitted slopes are consistent with $\mathcal{O}(nr^2)$ against $\mathcal{O}(n^2)$, and the speedup grows linearly with $n$, reaching $\approx12\times$ in FLOPs at $n=2000$.}
\label{fig:exp123}
\end{figure}

\paragraph{Experiment 1 (truncation error vs.\ Theorem~\ref{thm:1}).}
Fixing $n=m=1000$, we train the full-rank benchmark with spectral regularization (Appendix~\ref{app:hyper}), then compress it by truncated SVD of its input-layer weight to rank $r\in\{5,10,20,50,100\}$, without any retraining. In Fig.~\ref{fig:exp123}(a), the horizontal axis is the truncation rank $r$ and the vertical axis is the worst-case approximation error $\|V_{\mathrm{full}}-\widehat{V}_r\|_\infty$ over $10^4$ test states; the blue curve is the measured error, the dashed curve is the certificate of Theorem~\ref{thm:1}, and the dotted line is the $\epsilon_{\mathrm{NN}}$ floor of the full-rank benchmark. The measured error decays steadily from $\approx 3.3\times10^{-2}$ at $r=5$ to $\approx 2.0\times10^{-3}$ at $r=100$, staying over two orders of magnitude below the certificate and even below the floor. Fig.~\ref{fig:exp123}(b) plots the singular-value spectrum $\sigma_i(W_{\mathrm{full}}^*)$ of the trained weight, with the singular-value index on the horizontal axis and $\sigma_i$ on a logarithmic vertical axis: the spectrum remains flat up to a sharp cliff at $i\approx 202$ and drops to $10^{-10}$ immediately beyond.

The message is twofold. First, Theorem~\ref{thm:1} is validated with a wide safety margin: even the largest tested truncation loses only $\approx 3\times10^{-2}$ of accuracy, two orders of magnitude below its own certificate. Second, truncation is not a compromise but a free lunch---the measured error falls even below the $\epsilon_{\mathrm{NN}}$ floor ($\approx 4.3\times10^{-1}$), because discarding the noise-dominated tail of the fitted weight yields a value function \emph{more accurate} than the full-rank benchmark, confirming the theory with a large margin. The spectrum panel shows that the full-rank network has learned an $\approx n/5$-dimensional representation of the game: of its 1000 singular directions, only the first $\approx 202$ carry non-negligible energy. This elevates Experiment~1 from validating a bound to directly observing the physical low-rank structure of ICS coupling---few attack and defense channels---in the trained result. It also explains panel~(a): every tested rank $r\le 100$ lies in the flat region of the spectrum, so the measured cost of discarding the entire tail beyond $r$ is at most $\approx 2\times10^{-3}$; the part beyond the cliff at $i\approx 202$ is pure numerical noise.

\paragraph{Experiment 2 (convergence vs.\ Theorem~\ref{thm:2}).}
At $n=1000$, $r=20$, $\gamma=0.95$, we run DLR-NE for $120$ iterations under four hyperparameter configurations, each with five seeds. In Fig.~\ref{fig:exp123}(c), the horizontal axis is the iteration $k$ and the vertical axis is the distance to the pretrained full-rank benchmark relative to its initial value (solid lines: mean; shaded bands: $\pm1$ std): every configuration first decays along the dashed $\gamma^k$ envelope and then bends into a plateau---the two phases predicted by Theorem~\ref{thm:2}. The inner budget dominates the plateau height: $s^*=5$ (green) stalls at $0.44\pm0.24$, about $20\times$ higher than $s^*=10$ (blue, $0.024\pm0.018$); halving the batch to $N_b=128$ (orange) leaves the plateau essentially unchanged ($0.029\pm0.026$); and $\beta=0$ (red) attains the lowest error ($0.017\pm0.010$). The dotted line marks the a priori plateau bound $\varepsilon_{\mathrm{total}}/(1-\gamma)\approx 3\times10^3$ estimated from the five error sources of Lemma~\ref{lem:2}. Fig.~\ref{fig:exp123}(d) shows the Bellman residual decaying in parallel and plateauing consistently with panel~(c), and Fig.~\ref{fig:exp123}(e) shows the condition-number trajectories ordered by $\beta$, with $\beta=0$ settling highest ($\kappa\approx 3.0$ vs.\ $2.4$).

The plateau responds to the knobs in the directions the five sources predict. Source~(ii) is dominant: an under-solved inner problem leaves a persistent per-iteration error that accumulates into the plateau, so halving $s^*$ costs a factor of $\approx 20$ in accuracy. Source~(iii) is comparatively negligible: sweeping $N_b$ from $16$ to $512$ moves the plateau only within seed noise ($0.032$--$0.024$), because a few hundred samples already cover the intrinsic $2r$-dimensional subspace---an appealing sample-efficiency property. The weight $\beta$ leaves accuracy almost untouched but orders the condition number, foreshadowing Corollary~\ref{cor:1}: spectral regularization buys a better-conditioned core at almost no cost. Finally, the a priori bound overestimates the measured plateau by roughly three orders of magnitude: Theorem~\ref{thm:2} is validated as a conservative certificate, in the same sense as Theorem~\ref{thm:1} in Experiment~1.

\paragraph{Experiment 3 (complexity vs.\ Theorem~\ref{thm:3}).}
We compare DLR-NE with the full-rank baseline over $n\in\{100,500,1000,2000\}$ at $r=10$. In Fig.~\ref{fig:exp123}(f), the horizontal axis is the state dimension $n$ and the vertical axis is the wall-clock time per iteration (both log-scaled): the full-rank cost grows steeply while DLR-NE grows mildly. Fig.~\ref{fig:exp123}(g) reports FLOPs per iteration on the same axes, whose fitted slopes are consistent with $\mathcal{O}(nr^2)$ for DLR-NE against $\mathcal{O}(n^2)$ for the baseline. Fig.~\ref{fig:exp123}(h) summarizes the speedup ratio (FC/DLR): the FLOPs speedup (orange) grows linearly with $n$, reaching $\approx12\times$ at $n=2000$ toward the asymptotic reference $n/r^2=20$, while the measured wall-clock speedup (blue) reaches $\approx4\times$.

The gap between the $\approx12\times$ FLOPs and $\approx4\times$ wall-clock speedups is expected at these scales: neither implementation has entered the asymptotic regime, and wall-clock is further limited by constant factors and cache effects, which motivates our FLOPs-first reporting. Against a baseline that additionally performs dense SVD for spectral diagnostics, the FLOPs advantage exceeds two orders of magnitude. At $n=100$ the fixed overhead of basis augmentation and core SVD is not yet amortized---the FLOPs ratio is even below one---and the advantage only turns positive for $n\gtrsim500$; from there, by the $\Theta(n/r^2)$ law, the advantage widens linearly with $n$ without bound---extrapolating to industrial dimensions where full-rank baselines become memory-bound.

\paragraph{Experiment 4 (robustness vs.\ Corollary~\ref{cor:1}).}
We test the certificate of Corollary~\ref{cor:1} in two regimes. First, sweeping $\beta\in\{0,0.01,0.1,1\}$ confirms the regularization link of Proposition~\ref{prop:cond}: the converged core's condition number drops from $\kappa(S)=83.7$ at $\beta=0$ to $31.0$ at $\beta=1$, with fitted sensitivity slopes decreasing in $\beta$. Second, fixing $\beta=0.1$, we train eight networks ($r\in\{10,20,40,60\}$, two seeds) and perturb each converged core. In Fig.~\ref{fig:exp456}(a), the horizontal axis is the core perturbation $\|\Delta S\|_F$ and the vertical axis is the resulting value deviation: all eight networks respond linearly ($R^2\in[0.87,0.98]$) with fitted slopes confined to a narrow band $[0.09,0.51]$, so the bound is never violated. Fig.~\ref{fig:exp456}(b) plots the fitted slope against the converged $\kappa(S)$ on log--log axes, with a slope-one proportionality reference (dashed).

Two findings follow. Per network, the certificate holds: every network obeys linear sensitivity with a constant far below the a priori bound. Across the family, however, $L_\kappa$ does not track $\kappa(S)$ ($R^2=0.17$) even though $\kappa(S)$ spans two orders of magnitude ($47$ to $1583$): the proportionality form of Corollary~\ref{cor:1} is an a priori worst-case certificate---it guarantees a finite, $\beta$-tunable sensitivity constant rather than predicting its value. This conservative-certificate behavior is consistent with Experiments~1 and~2.

We close with a remark on magnitudes. Since $\kappa(S)=\sigma_{\max}(S)/\sigma_{\min}(S)$ and spectral regularization floors $\sigma_{\min}(S)$ near a small threshold, $\kappa(S)$ is a tail-sensitive quantity: minor run-to-run differences in the floor are amplified into orders-of-magnitude differences in the ratio. Fig.~\ref{fig:exp123}(e) reported the trajectory of one default configuration ($r=20$, five seeds, batch-averaged), settling at $\kappa\approx2.4$--$3.0$, whereas the present experiment reports per-network converged values across four ranks and two seeds; the two sets of numbers therefore answer different questions and neither is ``the'' typical condition number. What is reproducible across both is the object the theory actually requires: $\kappa(S)$ decreases monotonically in $\beta$, and the certified sensitivity bound is never violated.

\begin{figure}[!htbp]
\centering
\includegraphics[width=0.8\textwidth,height=0.5\textheight,keepaspectratio]{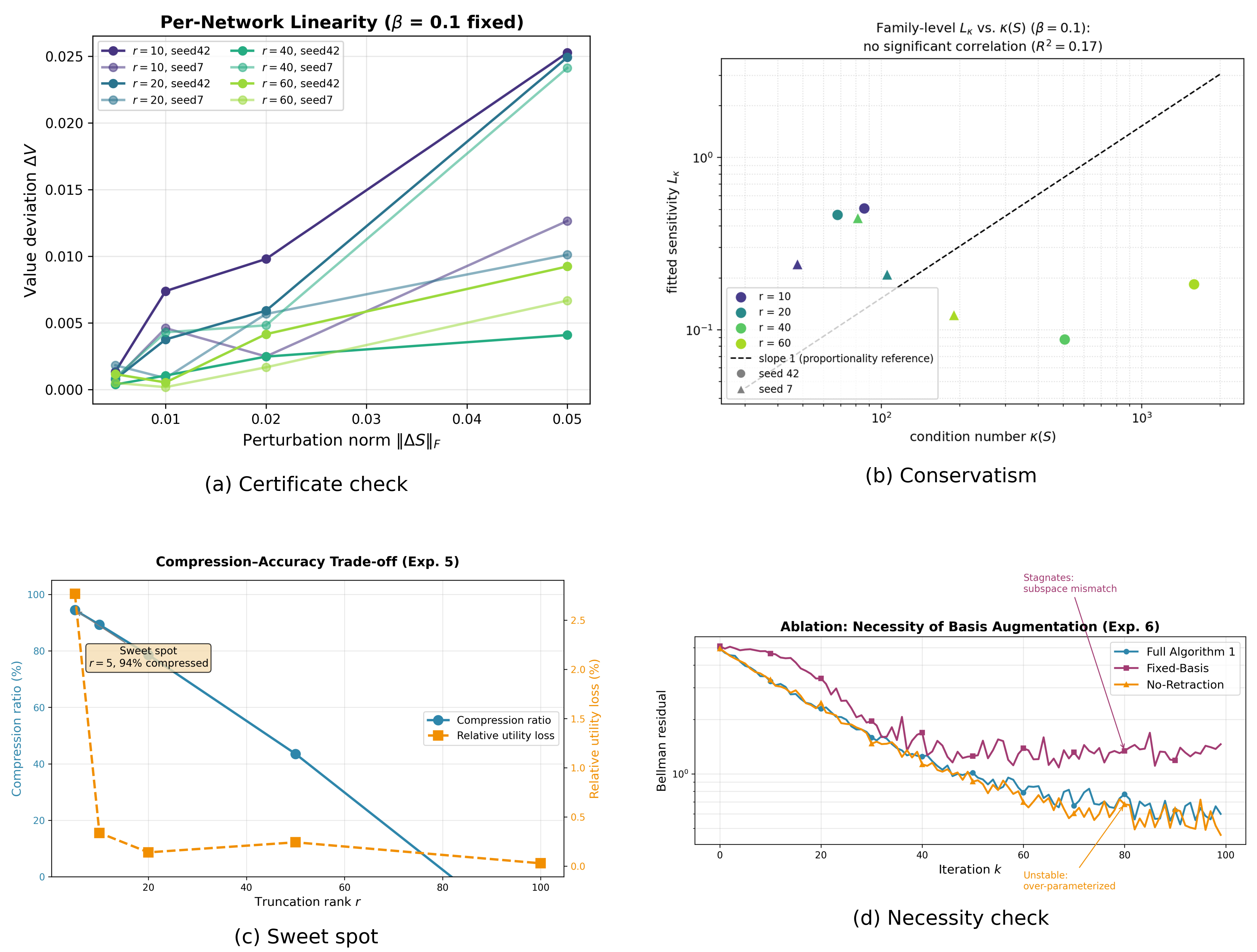}
\caption{Experiments 4--6 at $n=m=1000$. (a),(b) Robustness: value deviation is linear in the core perturbation in all eight networks, while the fitted sensitivity does not track the condition number---the certificate of Corollary~\ref{cor:1} is valid but conservative. (c) Compression--accuracy trade-off: $r=5$ removes $\approx94\%$ of the parameters at $\approx2.3\%$ utility loss. (d) Ablation: disabling basis augmentation stagnates convergence and skipping retraction destabilizes it---both mechanisms are necessary.}
\label{fig:exp456}
\end{figure}

\paragraph{Experiment 5 (compression--accuracy trade-off).}
We vary $r\in\{5,10,20,50,100\}$ at $n=1000$. In Fig.~\ref{fig:exp456}(c), the horizontal axis is the truncation rank $r$, the left vertical axis is the fraction of removed parameters, and the right vertical axis is the relative utility loss: the frontier is sharply convex---utility recovers most of its full-rank value at very small rank and saturates, while compression improves steeply.

Quantitatively, the sweet spot $r=5$ removes $\approx94\%$ of the parameters at $\approx2.3\%$ utility loss. This Pareto behavior is jointly predicted by Theorems~\ref{thm:1} and~\ref{thm:3}---the truncation error decays with the spectral tail while the cost saving grows with $n/r^2$---and quantifies the central claim that the physical low-rank coupling of ICS dynamics permits tiny truncation ranks in practice.

\paragraph{Experiment 6 (ablation: necessity of both mechanisms).}
At $n=1000$, $r=10$, Fig.~\ref{fig:exp456}(d) tracks the Bellman residual of three variants; the horizontal axis is the iteration $k$ and the vertical axis is the Bellman residual. Fixed-Basis (basis augmentation disabled, $(U,V)$ frozen) decays initially but stagnates at a high plateau; No-Retraction (the truncated-SVD retraction skipped, the iterate staying on the rank-$2r$ augmented manifold) decays fast early yet destabilizes in the later phase; only the full algorithm combines fast initial decay with a stable low plateau.

The two failures pinpoint the role of each mechanism. A frozen basis cannot align with the state-dependent local subspaces certified by Lemma~\ref{lem:1}, so the basis-subspace source~(i) dominates the Fixed-Basis plateau; without retraction the rank-$2r$ iterate drifts on the augmented manifold, and the missing anchoring eventually amplifies the inner and retraction sources. The full algorithm suppresses both failure modes, corroborating Lemma~\ref{lem:1} and Assumption~\ref{ass:4}(c).

% ============================================================
% SECTION 6: CONCLUSION
% ============================================================
\section{Conclusion}\label{sec:6}

This paper addressed equilibrium computation for the strategic interaction between advanced persistent threats and moving target defense in industrial control systems, modeled as a zero-sum stochastic game in a high-dimensional continuous state space. While game-theoretic formulations of this interaction are well established, their practical value is decided by two questions that existing theory leaves open: whether the equilibrium value can be computed at industrial state dimensions in real time, and whether the resulting defense strategy remains safe under adversarial perturbation of its own parameters.

Our answer rests on a structural observation and a certified algorithm. Structurally, we proved that the low-rank coupling of industrial dynamics---few attack entry points and limited reconfigurable resources---propagates through the non-smooth Bellman operator: the augmented gradient matrix of Lemma~\ref{lem:1} certifies that every Bellman target is locally captured by a $2r$-dimensional subspace, and Theorem~\ref{thm:1} turns the locally certified structure into an explicit global approximation bound governed by the tail singular values of the full-rank benchmark, whose fast decay for ICS dynamics is evidenced by the spectral measurements of Section~\ref{sec:5}. Algorithmically, since these certified subspaces drift along the value-iteration trajectory, the DLR-NE algorithm augments the rank-$r$ search space each iteration, regularizes the core-matrix spectrum, and retracts via truncated SVD, extracting a Nash equilibrium at every step. Three guarantees accompany the algorithm: geometric convergence to a neighborhood whose error decomposes into five physically interpretable sources (Theorem~\ref{thm:2}), per-step cost $\mathcal{O}(nr^2)$ yielding a $\Theta(n/r^2)$ speedup (Theorem~\ref{thm:3}), and an explicit robustness bound in which a single weight trades accuracy against certified safety through condition-number control (Corollary~\ref{cor:1}). Six numerical experiments on a nonlinear power-system testbed confirm each theoretical prediction individually; at the Pareto sweet spot, the method attains $94\%$ parameter compression at approximately $2.3\%$ relative utility loss.

The certificates are honest about two limitations, one structural and one computational. First, the theory is developed for the simultaneous-move, zero-sum abstraction with full state observability, which idealizes the sequential, information-asymmetric nature of real APT campaigns. Second---and this limitation is confirmed by both theory and experiment---the low-rank advantage has a scale threshold: Theorem~\ref{thm:3} predicts a speedup of only $\Theta(n/r^2)$, which is below one for $n<r^2$, and the measurements of Experiment~3 agree, the FLOPs ratio at $n=100$ being smaller than one so that the advantage turns positive only for $n\gtrsim500$; DLR-NE is thus a tool for large-scale ICS instances, and for small systems full-rank value iteration remains the simpler and cheaper choice. Relatedly, the certificates are conservative by construction: the a priori convergence bound overestimates the measured plateau by roughly three orders of magnitude, and the robustness certificate guarantees a finite sensitivity constant without predicting its value.

Two directions follow naturally. On the application side, the certificates established here provide the theoretical basis for a deployment-oriented study on real ICS testbeds beyond the power-grid instance validated here, covering controller synthesis under the computed equilibria and integration with existing intrusion detection pipelines. On the theory side, extending the augmented-gradient analysis to games with partial observability and asymmetric information---where the defender's measurement of the attack surface is itself noisy---would close the gap between the simultaneous-move abstraction and sequential intrusion campaigns.

% ---------- end of §6 ----------

% ============================================================
% APPENDICES (Lemma 1 proof now embedded in Section 3)
% ============================================================
\appendix
\makeatletter
\renewcommand{\@seccntformat}[1]{Appendix \csname the#1\endcsname.\quad}
\makeatother

\section{Proofs of Propositions~\ref{prop:grad} and \ref{prop:cond}}\label{app:props}

\begin{proof}[Proof of Proposition~\ref{prop:grad}]
Write $\mathcal{R}(S) = \left\lVert M \right\rVert_F$ with $M := S^\top S - \alpha_S^2 I_r$ and $\alpha_S^2 = \tfrac{1}{r}\left\lVert S \right\rVert_F^2$. Where $\mathcal{R}(S) > 0$, the Fr\'echet derivative is
\[
\nabla \mathcal{R}(S) = \frac{1}{\mathcal{R}(S)}\,D\!\left(\tfrac{1}{2}\left\lVert M \right\rVert_F^2\right)[S].
\]
For the first term, $D(S^\top S)[H] = H^\top S + S^\top H$; for the second, $D(\alpha_S^2 I)[H] = \tfrac{2}{r}\langle S, H\rangle_F I$. Since $\langle M, \alpha_S^2\text{-term}\rangle_F = \alpha_S^2 \mathrm{tr}(M) = 0$ (because $\mathrm{tr}(S^\top S) = r\alpha_S^2$ by definition of $\alpha_S^2$), the $\alpha_S$-variation drops out of the gradient:
\[
D\!\left(\tfrac{1}{2}\left\lVert M \right\rVert_F^2\right)[H] = \langle M, H^\top S + S^\top H\rangle_F = 2\langle SM, H\rangle_F.
\]
Hence $\nabla\mathcal{R}(S) = 2SM/\mathcal{R}(S)$, which is the gradient formula of Proposition~\ref{prop:grad}. All operations are $r\times r$ matrix products plus one Frobenius norm: $\mathcal{O}(r^3)$ flops.
\end{proof}

\begin{proof}[Proof of Proposition~\ref{prop:cond}]
Let $\sigma_1 \geq \cdots \geq \sigma_r > 0$ be the singular values of $S$. From \eqref{eq:reg},
\[
\mathcal{R}(S)^2 = \sum_{i=1}^{r}\bigl(\sigma_i^2 - \alpha_S^2\bigr)^2, \qquad \alpha_S^2 = \tfrac{1}{r}\sum_{i=1}^{r}\sigma_i^2.
\]
Let $\kappa = \kappa(S) = \sigma_1/\sigma_r$. Using $\log\kappa = \log\sigma_1 - \log\sigma_r = \sum_{i=1}^{r-1}\log\frac{\sigma_i}{\sigma_{i+1}}$ and the elementary inequality $\log t \leq t - 1$,
\[
\log\kappa \;\leq\; \sum_{i=1}^{r-1}\left(\frac{\sigma_i}{\sigma_{i+1}} - 1\right).
\]
Since $\sigma_{i+1}^2 \geq \sigma_r^2$ for all $i$, $\sigma_i/\sigma_{i+1} - 1 \leq (\sigma_i^2 - \sigma_{i+1}^2)/(2\sigma_r^2)$. Summing telescopically,
\[
\log\kappa \;\leq\; \frac{\sigma_1^2 - \sigma_r^2}{2\sigma_r^2} \;\leq\; \frac{1}{2\sigma_r^2}\sum_{i=1}^{r}\left\lvert \sigma_i^2 - \alpha_S^2 \right\rvert \;\leq\; \frac{\sqrt{r}}{2\sigma_r^2}\sqrt{\tfrac{1}{r}\sum_i (\sigma_i^2-\alpha_S^2)^2} \;\leq\; \frac{\mathcal{R}(S)}{\sqrt{2}\,\sigma_r^2},
\]
where the last two steps use Cauchy--Schwarz and $\sqrt{r} \leq r/\sqrt{2}$ for $r \geq 2$ (and the case $r=1$ gives $\kappa=1$ trivially). Exponentiating yields the bound of Proposition~\ref{prop:cond}.
\end{proof}

% ============================================================
% APPENDIX C: Proofs for Section 6.1
% ============================================================
\section{Proofs of Lemma~\ref{lem:bridge}, Lemma~\ref{lem:2}, and Theorem~\ref{thm:2}}\label{app:thm2}

\begin{proof}[Proof of Lemma~\ref{lem:bridge}]
Since $f^* \in \mathcal{H}^{(k)}$ is feasible for the empirical problem and $\widehat{f}$ is its minimizer,
\[
\frac{1}{N_\delta}\sum_{j=1}^{N_\delta}\bigl(\widehat{f}(x_j)-g(x_j)\bigr)^2 \;\leq\; \frac{1}{N_\delta}\sum_{j=1}^{N_\delta}\bigl(f^*(x_j)-g(x_j)\bigr)^2 + \varepsilon_{\mathrm{inner}} \;\leq\; \epsilon_{\mathrm{app}}^2 + \varepsilon_{\mathrm{inner}},
\]
where the last inequality uses $\left\lVert f^*-g \right\rVert_\infty \leq \epsilon_{\mathrm{app}}$ (Assumption~\ref{ass:4}(c)). Hence there exists $j^* \in \{1,\dots,N_\delta\}$ with
\[
\bigl|\widehat{f}(x_{j^*}) - g(x_{j^*})\bigr| \;\leq\; \sqrt{\epsilon_{\mathrm{app}}^2 + \varepsilon_{\mathrm{inner}}} \;\leq\; \epsilon_{\mathrm{app}} + \sqrt{\varepsilon_{\mathrm{inner}}}.
\]
Fix an arbitrary $x \in \mathcal{X}$ and choose $x_j \in \mathcal{C}_\delta$ with $\left\lVert x - x_j \right\rVert_2 \leq \delta$ (possible by the covering property). Then, by the Lipschitz assumptions (ii),
\[
|\widehat{f}(x) - g(x)| \;\leq\; \underbrace{|\widehat{f}(x) - \widehat{f}(x_j)|}_{\leq\, L_{\mathrm{fun}}\,\delta} \;+\; \underbrace{|\widehat{f}(x_j) - g(x_j)|}_{\leq\; \sqrt{N_\delta}\,\sqrt{\frac{1}{N_\delta}\sum_i (\widehat f(x_i)-g(x_i))^2}} \;+\; \underbrace{|g(x_j) - g(x)|}_{\leq\, L_{\mathcal{T}}\,\delta}.
\]
The middle term is bounded by Cauchy--Schwarz:
\[
\max_{j}\,|\widehat{f}(x_j)-g(x_j)| \;\leq\; \sqrt{N_\delta}\cdot\sqrt{\frac{1}{N_\delta}\sum_{j=1}^{N_\delta}\bigl(\widehat{f}(x_j)-g(x_j)\bigr)^2} \;\leq\; \sqrt{N_\delta}\,\epsilon_{\mathrm{app}} + \sqrt{N_\delta\,\varepsilon_{\mathrm{inner}}}.
\]
Combining and taking the supremum over $x \in \mathcal{X}$ yields \eqref{eq:bridge}.
\end{proof}

\begin{proof}[Proof of Lemma~\ref{lem:2}]
We bound the two terms of the decomposition $\left\lVert f - g \right\rVert_\infty \leq \left\lVert f - \widehat{f} \right\rVert_\infty + \left\lVert \widehat{f} - g \right\rVert_\infty$.

\emph{Term 1 (Bridge).} By the inner-loop termination criterion of Algorithm~\ref{alg:dlrne}, $\widehat{f}$ achieves empirical risk at most $\epsilon_{\mathrm{app}}^2 + \varepsilon_{\mathrm{inner}}$ on $\mathcal{C}_\delta$. Assumption~\ref{ass:4}(c) supplies the candidate $f^* \in \mathcal{H}^{(k)}$ with $\left\lVert f^*-g \right\rVert_\infty \leq \epsilon_{\mathrm{app}}$, and the Lipschitz constants $L_{\mathrm{fun}}$ (uniform on $\mathcal{H}^{(k)}$ by the smoothness and boundedness assumptions on the activation, weights, and regularizer) and $L_{\mathcal{T}}$ (derived in Section~2) satisfy the hypotheses of Lemma~\ref{lem:bridge}. Therefore
\[
\left\lVert \widehat{f} - g \right\rVert_\infty \;\leq\; \sqrt{N_\delta}\,\epsilon_{\mathrm{app}} + \sqrt{N_\delta\,\varepsilon_{\mathrm{inner}}} + \bigl(L_{\mathrm{fun}}+L_{\mathcal{T}}\bigr)\delta.
\]

\emph{Term 2 (Retraction).} Write $\widehat{W} = \widehat{U}\widehat{S}\widehat{V}^\top$ and $W = \widehat{U}\,\Pi_{\mathcal{M}_r}(\widehat{S})\,\widehat{V}^\top$, where $\Pi_{\mathcal{M}_r}(\widehat{S})$ is the rank-$r$ truncated SVD of $\widehat{S}$. Since $\widehat{U},\widehat{V}$ have orthonormal columns,
\[
\left\lVert W - \widehat{W} \right\rVert_2 \;=\; \bigl\|\Pi_{\mathcal{M}_r}(\widehat{S}) - \widehat{S}\bigr\|_2 \;\leq\; \sqrt{\sum_{i=r+1}^{2r}\sigma_i^2(\widehat{S})} \;\leq\; \epsilon_{\mathrm{retr}},
\]
where the first inequality is Eckart--Young and $\widehat{S} \in \mathcal{S}^{\mathrm{cpt}}_{2r}$ by the coercivity of the regularizer (Assumption~\ref{ass:4}(a)), which keeps the inner-loop trajectory inside a compact level set determined by the initialization and $\beta$. For any $x \in \mathcal{X}$ with $\left\lVert x \right\rVert_2 \leq R_{\mathcal{X}}$,
\[
|f(x) - \widehat{f}(x)| \;=\; \bigl|w_{\mathrm{out}}^\top\bigl(\phi(Wx) - \phi(\widehat{W}x)\bigr)\bigr| \;\leq\; \left\lVert w_{\mathrm{out}} \right\rVert_2\,L_\phi\,\left\lVert (W-\widehat{W})x \right\rVert_2 \;\leq\; L_\phi\,\left\lVert w_{\mathrm{out}} \right\rVert_2\,\epsilon_{\mathrm{retr}}\,R_{\mathcal{X}},
\]
using the $L_\phi$-Lipschitz activation (Assumption~\ref{ass:5}(c)) and $\left\lVert w_{\mathrm{out}} \right\rVert_2 \leq W_{\max}$.

\emph{Synthesis.} The min-max targets are evaluated only to accuracy $\epsilon_{\mathrm{mm}}$ (Assumption~\ref{ass:1}(h)); writing $\widehat{g}$ for the computed target, $\left\lVert \widehat{g}-g \right\rVert_\infty\leq\epsilon_{\mathrm{mm}}$, and this discrepancy is carried through the decomposition unchanged, contributing source (v) of \eqref{eq:lem2}. The triangle inequality combines the two bounds:
\[
\left\lVert f - g \right\rVert_\infty \;\leq\; \underbrace{\sqrt{N_\delta}\,\epsilon_{\mathrm{app}}}_{\text{(i)}} \;+\; \underbrace{\sqrt{N_\delta\,\varepsilon_{\mathrm{inner}}}}_{\text{(ii)}} \;+\; \underbrace{\bigl(L_{\mathrm{fun}}+L_{\mathcal{T}}\bigr)\delta}_{\text{(iii)}} \;+\; \underbrace{L_\phi W_{\max} R_{\mathcal{X}}\,\epsilon_{\mathrm{retr}}}_{\text{(iv)}} \;+\; \underbrace{\epsilon_{\mathrm{mm}}}_{\text{(v)}} \;=\; \varepsilon_{\mathrm{total}},
\]
which is exactly \eqref{eq:lem2}; the $\sqrt{N_\delta}$ factors in (i)--(ii) are structural and not removable (see the remark following Lemma~\ref{lem:bridge}).
\end{proof}

\begin{proof}[Proof of Theorem~\ref{thm:2}]
Let $\mathcal{A}:\mathcal{F}_r \to \mathcal{F}_r$ denote the deterministic operator realized by one outer iteration of Algorithm~\ref{alg:dlrne} (targets, augmentation, inner loop, retraction, extraction), so that $V^{(k+1)} = \mathcal{A}V^{(k)}$. By Lemma~\ref{lem:2},
\[
\left\lVert \mathcal{A}V^{(k)} - \mathcal{T} V^{(k)} \right\rVert_\infty \;\leq\; \varepsilon_{\mathrm{total}} \qquad \text{for all } k \geq 0.
\]
Since $V^* = \mathcal{T} V^*$ (Proposition~\ref{prop:contraction}), the triangle inequality and the $\gamma$-contraction of $\mathcal{T}$ give
\[
\left\lVert V^{(k+1)} - V^* \right\rVert_\infty \;\leq\; \left\lVert \mathcal{A}V^{(k)} - \mathcal{T} V^{(k)} \right\rVert_\infty + \left\lVert \mathcal{T} V^{(k)} - \mathcal{T} V^* \right\rVert_\infty \;\leq\; \varepsilon_{\mathrm{total}} + \gamma\,\left\lVert V^{(k)} - V^* \right\rVert_\infty.
\]
Iterating this scalar recursion from $k=0$,
\[
\left\lVert V^{(k)} - V^* \right\rVert_\infty \;\leq\; \gamma^k \left\lVert V^{(0)} - V^* \right\rVert_\infty + \varepsilon_{\mathrm{total}}\sum_{j=0}^{k-1}\gamma^j \;\leq\; \gamma^k \left\lVert V^{(0)} - V^* \right\rVert_\infty + \frac{\varepsilon_{\mathrm{total}}}{1-\gamma},
\]
which is \eqref{eq:thm2}; the limsup statement follows by $\gamma^k \to 0$.

\emph{Quantitative balance between $\delta$ and $N_\delta$.} Substituting the covering bound $N_\delta \leq \bigl(1 + 2R_{\mathcal{X}}/\delta\bigr)^n$ into \eqref{eq:lem2}, the sampling-dependent part of the steady-state error behaves, up to constants, as
\[
\varepsilon_{\mathrm{total}}(\delta) \;\approx\; \epsilon_{\mathrm{app}}\,\delta^{-n/2} + \sqrt{\varepsilon_{\mathrm{inner}}}\,\delta^{-n/2} + \bigl(L_{\mathrm{fun}}+L_{\mathcal{T}}\bigr)\delta .
\]
As $\delta \downarrow 0$ the geometric term vanishes but the Bridge terms blow up as $\delta^{-n/2}$; as $\delta \uparrow$ the Bridge terms improve but the geometric term degrades linearly. The worst-case tradeoff is therefore optimized at an interior $\delta$, and the bound is informative only when the intrinsic dimension of the learned representation (here $2r$, per the discussion following Theorem~\ref{thm:2} and in \S\ref{sec:5}) replaces the ambient exponent $n$ by a much smaller effective dimension. This is precisely why the DLR-NE architecture---which confines learning to a $2r$-dimensional feature embedding---makes the $\sqrt{N_\delta}$ factors benign in practice, as confirmed empirically in \S\ref{sec:5}.
\end{proof}

% ============================================================
% APPENDIX D: Proof of Theorem 3
% ============================================================
\section{Proof of Theorem~\ref{thm:3}}\label{app:thm3}

\begin{proof}
\emph{Step 1 (Dimensions and batch convention).} The hidden-layer weight is factored as $W = USV^\top$ with
\[
U \in \mathbb{R}^{m\times r},\quad S \in \mathbb{R}^{r\times r},\quad V \in \mathbb{R}^{n\times r},\quad U^\top U = V^\top V = I_r,
\]
and after basis augmentation the orthogonal bases double: $\widehat{U} \in \mathbb{R}^{m\times 2r}$, $\widehat{V} \in \mathbb{R}^{n\times 2r}$, $\widehat{S} \in \mathbb{R}^{2r\times 2r}$. The output layer $w_{\mathrm{out}} \in \mathbb{R}^m$ is full-rank, and $m = \Theta(n)$. We account flops per outer iteration; sample-dependent operations (forward/backward passes) are counted per sample at batch $B = \Theta(1)$, which does not change the asymptotic order in $(n,r)$, and the covering-set size $\mathcal{N}_\delta$ is a statistical parameter already quantified inside $\varepsilon_{\mathrm{total}}$ (Theorem~\ref{thm:2}) and not double-counted here.

\begin{sloppypar}
\emph{Step 2 (Basis augmentation and orthonormalization).} For a single sample $x \in \mathbb{R}^n$, define the forward intermediates $z = V^\top x \in \mathbb{R}^r$, $h = Sz \in \mathbb{R}^r$, $g = Uh \in \mathbb{R}^m$, $\hat{v} = w_{\mathrm{out}}^\top\phi(g)$. With scalar residual $\delta_{\mathrm{out}} = \hat{v} - y$, backpropagation gives: $\nabla_{w_{\mathrm{out}}} = \delta_{\mathrm{out}}\phi(g)$ at $\mathcal{O}(m) = \mathcal{O}(n)$; the error signal $\delta_g = \delta_{\mathrm{out}}w_{\mathrm{out}} \odot \phi'(g)$ at $\mathcal{O}(n)$; $\nabla_U = \delta_g h^\top$ at $\mathcal{O}(mr) = \mathcal{O}(nr)$; $\nabla_S = (U^\top\delta_g)z^\top$ at $\mathcal{O}(nr)$; and $\nabla_V = x\,(S^\top U^\top \delta_g)^\top$ at $\mathcal{O}(nr)$. Hence the per-sample cost of $(\nabla_U,\nabla_V)$ is $\mathcal{O}(nr)$, and aggregation over $B = \Theta(1)$ samples preserves this order. The QR factorizations of $[U \mid \nabla_U\mathcal{L}] \in \mathbb{R}^{m\times 2r}$ and $[V \mid \nabla_V\mathcal{L}] \in \mathbb{R}^{n\times 2r}$ each cost $\mathcal{O}\bigl(n(2r)^2\bigr) = \mathcal{O}(nr^2)$. \emph{Subtotal: $\mathcal{O}(nr^2)$.}
\end{sloppypar}

\emph{Step 3 (Coefficient augmentation).} Computing $\widehat{S}_0 = \widehat{U}^\top U S V^\top \widehat{V}$ as the chain $\widehat{U}^\top U$ ($\mathcal{O}(mr^2) = \mathcal{O}(nr^2)$), $(\widehat{U}^\top U)S$ ($\mathcal{O}(r^3)$), $V^\top\widehat{V}$ ($\mathcal{O}(nr^2)$), and the final $2r\times 2r$ product ($\mathcal{O}(r^3)$). \emph{Subtotal: $\mathcal{O}(nr^2 + r^3)$.}

\emph{Step 4 (Inner regularized training, $s^*$ steps).} Per sample, the forward pass costs $\mathcal{O}(nr)$: $\widehat{V}^\top x$ at $\mathcal{O}(nr)$, $\widehat{S}(\widehat{V}^\top x)$ at $\mathcal{O}(r^2)$, $\widehat{U}(\cdot)$ at $\mathcal{O}(mr) = \mathcal{O}(nr)$, activation and output at $\mathcal{O}(n)$. The backward pass costs $\mathcal{O}(nr + r^3)$: the output-layer gradient at $\mathcal{O}(n)$ (absorbed), $\widehat{U}^\top\delta_g$ at $\mathcal{O}(nr)$, $\nabla_{\widehat{S}} = (\widehat{U}^\top\delta_g)(\widehat{V}^\top x)^\top$ at $\mathcal{O}(r^2)$, and the spectral-regularization gradient $\nabla\mathcal{R}(\widehat{S})$ at $\mathcal{O}(r^3)$ by Proposition~\ref{prop:grad}. Thus one inner step costs $\mathcal{O}(nr + r^3)$, and $s^*$ steps cost $\mathcal{O}(s^* nr + s^* r^3)$. For $s^* = \mathcal{O}(r)$ and $r = o(\sqrt{n})$, the term $s^* r^3 = \mathcal{O}(r^4)$ is absorbed into $nr^2$. \emph{Subtotal: $\mathcal{O}(s^* nr)$.}

\emph{Step 5 (Truncated SVD and retraction).} The rank-$r$ truncated SVD of $\widehat{S}^{(s^*)} \in \mathbb{R}^{2r\times 2r}$ costs $\mathcal{O}\bigl((2r)^3\bigr) = \mathcal{O}(r^3)$. Updating the factors---$U \leftarrow \widehat{U}P$ at $\mathcal{O}(mr^2) = \mathcal{O}(nr^2)$, $V \leftarrow \widehat{V}Q$ at $\mathcal{O}(nr^2)$, $S \leftarrow \Sigma$ at $\mathcal{O}(r)$, and $w_{\mathrm{out}} \leftarrow \widehat{w}_{\mathrm{out}}^{(s^*)}$ at $\mathcal{O}(n)$---gives \emph{subtotal $\mathcal{O}(nr^2 + r^3)$.}

\emph{Step 6 (Policy extraction).} Nash extraction minimizes $Q(x,a,d;\theta)$ over $(a,d)$, where $Q(x,a,d;\theta) = r(x,a,d) + \gamma\,\mathbb{E}_{x'}[w_{\mathrm{out}}^\top\phi(USV^\top x')]$. Since $r_a, r_d \ll n$ and the $(a,d)$-dependence is already reduced through the low-rank forward pass, the low-dimensional min--max is solved at $\mathrm{poly}(r, r_a, r_d) = \mathcal{O}(r^3)$ (e.g., by linear programming or low-dimensional gradient descent--ascent). The expectation $\mathbb{E}_{x'}[V(x')]$ is estimated by Monte Carlo at $\mathcal{O}(nr)$ per sample, but this cost is \emph{not} charged separately: it reuses the same sample batch as the inner loop, and its statistical error is already quantified inside $\varepsilon_{\mathrm{total}}$ (Theorem~\ref{thm:2}). \emph{Subtotal (pure optimization): $\mathcal{O}(r^3)$.}

\emph{Step 7 (Total).} Summing the subtotals with $B = \Theta(1)$:
\begin{center}
\small
\begin{tabular}{@{}lc@{}}
\toprule
Stage & Complexity \\
\midrule
Basis augmentation and orthonormalization & $\mathcal{O}(nr^2)$ \\
Coefficient augmentation & $\mathcal{O}(nr^2 + r^3)$ \\
Inner training ($s^*$ steps) & $\mathcal{O}(s^* nr)$ \\
Truncated SVD and retraction & $\mathcal{O}(nr^2 + r^3)$ \\
Policy extraction & $\mathcal{O}(r^3)$ \\
\midrule
\textbf{Total} & $\mathcal{O}\bigl(nr^2 + r^3 + s^* nr\bigr)$ \\
\bottomrule
\end{tabular}
\end{center}
which is \eqref{eq:thm3-cost}.

\emph{Step 8 (Full-rank baseline and speedup).} For the full-rank FC-NN with $W_{\mathrm{full}} \in \mathbb{R}^{m\times n}$, $m = \Theta(n)$: the per-sample forward product $W_{\mathrm{full}}x$ costs $\mathcal{O}(mn) = \mathcal{O}(n^2)$; the gradient $\nabla_{W_{\mathrm{full}}} = \delta_g x^\top$ costs $\mathcal{O}(n^2)$; and the weight update costs $\mathcal{O}(n^2)$. Hence $\mathcal{C}_{\mathrm{full}}(n) = \mathcal{O}(n^2)$, which is the full-rank cost quoted in Theorem~\ref{thm:3}. For $r \ll n$ and $s^* = \mathcal{O}(r)$, $\mathcal{C}_{\mathrm{DLR}} = \mathcal{O}(nr^2)$ (since $s^*nr = \mathcal{O}(nr^2)$ and $r^3$ is absorbed), giving
\[
\frac{\mathcal{C}_{\mathrm{full}}}{\mathcal{C}_{\mathrm{DLR}}} = \frac{\Theta(n^2)}{\Theta(nr^2)} = \Theta\!\left(\frac{n}{r^2}\right),
\]
which is \eqref{eq:thm3-speedup}. Finally, if the full-rank baseline additionally performs dense SVD for spectral monitoring or a posteriori low-rank compression---as common in ICS weight-spectrum diagnostics---the full-rank side incurs $\mathcal{O}(mn^2) = \mathcal{O}(n^3)$, raising the speedup to $\Theta\bigl(n^2/r^2\bigr)$ under the convention that SVD monitoring runs every iteration (a lower SVD frequency scales the gain down proportionally).
\end{proof}

% ============================================================
% APPENDIX E: Proof of Corollary 1
% ============================================================
\section{Proof of Corollary~\ref{cor:1}}\label{app:cor1}

\begin{proof}
\emph{Step 1 (Implicit-function regularity of the policy map).} For fixed $x \in \mathcal{X}$ and fixed attacker policy $\pi_A$, define the defense-action gradient field
\begin{equation}\label{eq:app-c1-g}
g(d;S) := \nabla_d\Bigl[\min_{a\in\mathcal{A}} Q(x,a,d;S)\Bigr] \in \mathbb{R}^{r_d}.
\end{equation}
By Assumption~\ref{ass:5}(a), $Q$ is $\mu$-strongly concave in $d$ uniformly in $x$, so $\nabla_d g(d^*;S) = \nabla_{dd}^2\bigl(\min_a Q\bigr) \preceq -\mu I_{r_d}$ is nonsingular at the equilibrium action $d^*(x;S)$. By Assumption~\ref{ass:1}(f) (reward decomposition) and Assumption~\ref{ass:5}(b) (Lipschitz continuity in actions), together with the $\mathcal{C}^2$ activation (Assumption~\ref{ass:5}(c)) and transition smoothness (Assumption~\ref{ass:5}(d)), $g(d;S)$ is jointly continuously differentiable in $(d,S)$. The implicit function theorem therefore applies: on a neighborhood $\mathcal{U}_S \times \mathcal{U}_d$, the map $S \mapsto d^*(x;S)$ is $\mathcal{C}^1$ with Jacobian
\begin{equation}\label{eq:app-c1-implicit}
\Bigl\|\frac{\partial d^*}{\partial S}\Bigr\|_{\mathrm{op}} \;\leq\; \bigl\|\nabla_d g^{-1}\bigr\|_{\mathrm{op}}\,\Bigl\|\frac{\partial g}{\partial S}\Bigr\|_{\mathrm{op}} \;\leq\; \frac{1}{\mu}\,\Bigl\|\frac{\partial g}{\partial S}\Bigr\|_{\mathrm{op}}.
\end{equation}

\emph{Step 2 (Quantifying $\left\lVert \partial g/\partial S \right\rVert_{\mathrm{op}}$).} By the chain rule and the definition of $Q$,
\begin{equation}\label{eq:app-c1-chain2}
\frac{\partial g}{\partial S} = \gamma\,\mathbb{E}_{a,x'}\left[\frac{\partial}{\partial S}\Bigl(B_d(x)^\top \nabla_{x'}V(x';S)\Bigr)\right],
\end{equation}
where $V(x';S) = w_{\mathrm{out}}^\top\phi(USV^\top x')$ and
\begin{equation}\label{eq:app-c1-gradv}
\nabla_{x'}V(x';S) = VS^\top U^\top \mathrm{diag}\bigl(\phi'(USV^\top x')\bigr) w_{\mathrm{out}}.
\end{equation}
Differentiating in $S$ (Assumption~\ref{ass:5}(c) guarantees a bounded $\phi''$) and using the orthonormality $\left\lVert U \right\rVert_2 = \left\lVert V \right\rVert_2 = 1$,
\begin{equation}\label{eq:app-c1-dgradv}
\Bigl\|\frac{\partial}{\partial S}\nabla_{x'}V(x';S)\Bigr\|_{\mathrm{op}} \;\leq\; L_\phi \left\lVert w_{\mathrm{out}} \right\rVert_2\, C_\phi''\, R_{\mathcal{X}}\,\sigma_{\max}(S) \;+\; L_\phi \left\lVert w_{\mathrm{out}} \right\rVert_2 .
\end{equation}
Combining \eqref{eq:app-c1-chain2}--\eqref{eq:app-c1-dgradv}, there is a merged constant $C_1 > 0$ (depending only on $L_\phi, C_\phi'', \left\lVert w_{\mathrm{out}} \right\rVert_2, R_{\mathcal{X}}$) such that
\begin{equation}\label{eq:app-c1-gS}
\Bigl\|\frac{\partial g}{\partial S}\Bigr\|_{\mathrm{op}} \;\leq\; \underbrace{\gamma\, C_1 \left\lVert B_d \right\rVert_{\mathrm{op}}}_{=:K}\; \sigma_{\max}(S).
\end{equation}

\emph{Step 3 (Policy perturbation Lipschitz constant).} From \eqref{eq:app-c1-implicit} and \eqref{eq:app-c1-gS}, for any perturbation $\Delta S$,
\begin{equation}\label{eq:app-c1-lippol}
\bigl\|d^*(x;S+\Delta S) - d^*(x;S)\bigr\|_2 \;\leq\; \frac{K}{\mu}\,\sigma_{\max}(S)\,\left\lVert \Delta S \right\rVert_F.
\end{equation}
Using $\sigma_{\max}(S) = \kappa(S)\,\sigma_{\min}(S)$ and $\sigma_{\min}(S) \leq \left\lVert S \right\rVert_F \leq S_{\max}$ (the uniform bound of Assumption~\ref{ass:4}(a)),
\begin{equation}\label{eq:app-c1-smax}
\sigma_{\max}(S) \;=\; \kappa(S)\,\sigma_{\min}(S) \;\leq\; \kappa(S)\,S_{\max},
\end{equation}
and substitution into \eqref{eq:app-c1-lippol} yields
\begin{equation}\label{eq:app-c1-lippol2}
\bigl\|d^*(x;S+\Delta S) - d^*(x;S)\bigr\|_2 \;\leq\; \underbrace{\left(\frac{K S_{\max}}{\mu}\right)}_{=:C_2}\; \kappa(S)\,\left\lVert \Delta S \right\rVert_F.
\end{equation}
Here $C_2 = \gamma C_1 \left\lVert B_d \right\rVert_{\mathrm{op}} S_{\max} / \mu$ is a genuine constant depending only on system parameters and the uniform bound $S_{\max}$, not on the particular singular-value profile of the current iterate $S$. The $\kappa(S)$ factor has a clear geometric meaning: it quantifies the anisotropy of the low-rank core; when $\kappa(S) \gg 1$, equal-sized parameter perturbations along different singular directions produce wildly unequal policy responses---the root of directional vulnerability that condition-number control removes.

\emph{Step 4 (From policy perturbation to value perturbation).} Let $V^{\pi_D,\pi_A}$ and $V^{\tilde\pi_D,\pi_A}$ denote the value functions under the two fixed policy pairs. Their Bellman equations differ by
\begin{equation}\label{eq:app-c1-bellman}
V^{\pi_D,\pi_A}(x) - V^{\tilde\pi_D,\pi_A}(x) = \mathbb{E}\bigl[r(x,a,d) - r(x,a,\tilde d)\bigr] + \gamma\,\mathbb{E}_{x'}\bigl[V^{\pi_D,\pi_A}(x') - V^{\tilde\pi_D,\pi_A}(x')\bigr],
\end{equation}
with $d = \pi_D(x)$, $\tilde d = \tilde\pi_D(x)$ deterministic by \eqref{eq:cor1-policy}. Taking sup-norms and using the $L_{r,d}$-Lipschitz continuity of $r$ in $d$ (Assumption~\ref{ass:5}(b)),
\begin{equation}\label{eq:app-c1-sup}
\left\lVert V^{\pi_D,\pi_A} - V^{\tilde\pi_D,\pi_A} \right\rVert_\infty \;\leq\; L_{r,d}\,\sup_x\bigl\|d^*(x;S+\Delta S) - d^*(x;S)\bigr\|_2 + \gamma\,\left\lVert V^{\pi_D,\pi_A} - V^{\tilde\pi_D,\pi_A} \right\rVert_\infty.
\end{equation}
Solving the fixed-point inequality,
\begin{equation}\label{eq:app-c1-solve}
\left\lVert V^{\pi_D,\pi_A} - V^{\tilde\pi_D,\pi_A} \right\rVert_\infty \;\leq\; \frac{L_{r,d}}{1-\gamma}\,\sup_x\bigl\|d^*(x;S+\Delta S) - d^*(x;S)\bigr\|_2.
\end{equation}
Substituting \eqref{eq:app-c1-lippol2},
\begin{equation}\label{eq:app-c1-final}
\left\lVert V^{\pi_D,\pi_A} - V^{\tilde\pi_D,\pi_A} \right\rVert_\infty \;\leq\; \underbrace{\left(\frac{L_{r,d}\,C_2}{1-\gamma}\right)}_{=:C_{\mathrm{robust}}}\;\kappa(S)\,\left\lVert \Delta S \right\rVert_F \;=\; L_\kappa\,\left\lVert \Delta S \right\rVert_F,
\end{equation}
where $C_{\mathrm{robust}} = L_{r,d}\,\gamma\, C_1 \left\lVert B_d \right\rVert_{\mathrm{op}} S_{\max} / \bigl(\mu(1-\gamma)\bigr)$ depends only on the listed system parameters. This establishes \eqref{eq:cor1-bound}.

\emph{Step 5 (Applying Proposition~\ref{prop:cond}).} By Proposition~\ref{prop:cond}, $\kappa(S) \leq \exp\bigl(\mathcal{R}(S)/(\sqrt{2}\,\sigma_r(S)^2)\bigr)$; substituting into the definition of $L_\kappa$ in \eqref{eq:cor1-bound} yields \eqref{eq:cor1-chain}.
\end{proof}

% ============================================================
% APPENDIX F: Hyperparameters
% ============================================================
\section{Hyperparameters}\label{app:hyper}

Table~\ref{tab:hyper} lists the complete hyperparameter configuration used in \S\ref{sec:5}. Values marked ``varied'' are swept as indicated in the corresponding experiment.

\begin{table}[h]
\centering
\caption{Hyperparameter configuration for the numerical experiments.}
\label{tab:hyper}
\footnotesize
\begin{tabular}{@{}lll@{}}
\toprule
Category & Parameter & Value \\
\midrule
System & State dimension $n$ & $\{100, 500, 1000, 2000\}$ (default $1000$) \\
& Hidden width $m$ & $m = n$ \\
& Attacker coupling rank $r_a$ & $3$ \\
& Defender coupling rank $r_d$ & $5$ \\
& Reward coupling rank $r_r$ & $5$ \\
& Discount factor $\gamma$ & $0.95$ \\
\midrule
DLR-NE & Target rank $r$ & $\{5, 10, 20, 50, 100\}$ (default $20$) \\
& Inner-loop steps $s^*$ & $\{5, 10, 20\}$ (default $10$) \\
& Batch size $N_b$ & $\{128, 512, 2048\}$ (default $512$) \\
& Spectral regularization $\beta$ & $\{0, 0.01, 0.1, 1.0\}$ (default $0.1$) \\
& Outer-loop tolerance $\varepsilon_{\mathrm{tol}}$ & $10^{-4}$ (Bellman residual) \\
& Optimizer (inner loop) & Adam, learning rate $0.01$ \\
\midrule
Evaluation & Test set size $N_{\mathrm{test}}$ & $10^4$ (LHS) \\
& Perturbation levels $\left\lVert \Delta S \right\rVert_F$ & $\{10^{-4}, 10^{-3}, 10^{-2}, 10^{-1}\}$ \\
& Converged $\kappa(S)$, Regime 1 of Exp.~4 & $83.7$ at $\beta=0$; $31.0$ at $\beta=1$ \\
& Random seeds & $42, 123, 456, 789, 2024,$ \\
& & $31415, 271828, 161803, 299792, 662606$ \\
\bottomrule
\end{tabular}
\end{table}

% ============================================================
% ACKNOWLEDGEMENT
% ============================================================
\section*{Acknowledgement}

This work is supported by the National Natural Science Foundation of China (62073288, 12075212), the National Key R\&D Program of China (Grant No. 2021YFC2101100), and their supports are thereby acknowledged.

\bibliographystyle{apalike}
\bibliography{refs_all_backup_54}

\end{document}